\documentclass[acmsmall]{acmart}

\usepackage{graphicx, caption, subcaption}
\usepackage[ruled,vlined]{algorithm2e}
\SetKwComment{Comment}{/* }{ */}

\setcopyright{none}
\settopmatter{printacmref=false} 
\renewcommand\footnotetextcopyrightpermission[1]{} 
\pagestyle{plain}

\makeatletter
\def\runningfoot{\def\@runningfoot{}}
\def\firstfoot{\def\@firstfoot{}}
\makeatother

\AtBeginDocument{%
  }
    



\newtheorem{theorem}{Theorem}[section]
\newtheorem{lemma}[theorem]{Lemma}

\begin{document}

\title{Revisiting the Mapping of Quantum Circuits: Entering the Multi-Core Era}

\author{Pau Escofet}
\authornote{Both authors contributed equally to this research.}
\email{pau.escofet@upc.edu}
\orcid{0000-0003-3372-1931}
\affiliation{%
  \institution{Universitat Politècnica de Catalunya}
  \city{Barcelona}
  \country{Spain}
  \postcode{08034}
}

\author{Anabel Ovide}
\authornotemark[1]
\email{aovigon@upv.es}
\orcid{TODO}
\affiliation{%
  \institution{Universitat Politècnica de València}
  \city{València}
  \country{Spain}
  \postcode{46022}
}

\author{Medina Bandic}
\email{m.bandic@tudelft.nl}
\orcid{0000-0003-4670-0988}
\affiliation{%
  \institution{Delft University of Technology (QuTech)}
  \city{Delft}
  \country{The Netherlands}
  \postcode{2628CJ}
}

\author{Luise Prielinger}
\email{l.p.prielinger@tudelft.nl}
\affiliation{%
  \institution{Delft University of Technology (QuTech)}
  \city{Delft}
  \country{The Netherlands}
  \postcode{2628CJ}
}

\author{Hans van Someren}
\email{j.vansomeren-1@tudelft.nl}
\orcid{0000-0003-4763-6455}
\affiliation{%
  \institution{Delft University of Technology (QuTech)}
  \city{Delft}
  \country{The Netherlands}
  \postcode{2628CJ}
}

\author{Sebastian Feld}
\email{s.feld@tudelft.nl}
\orcid{0000-0003-2782-1469}
\affiliation{%
  \institution{Delft University of Technology (QuTech)}
  \city{Delft}
  \country{The Netherlands}
  \postcode{2628CJ}
}

\author{Eduard Alarcón}
\email{eduard.alarcon@upc.edu}
\orcid{0000-0001-7663-7153}
\affiliation{%
  \institution{Universitat Politècnica de Catalunya}
  \city{Barcelona}
  \country{Spain}
  \postcode{08034}
}

\author{Sergi Abadal}
\email{abadal@ac.upc.edu}
\orcid{0000-0003-0941-0260}
\affiliation{%
  \institution{Universitat Politècnica de Catalunya}
  \city{Barcelona}
  \country{Spain}
  \postcode{08034}
}

\author{Carmen G. Almudéver}
\email{cargara2@disca.upv.es}
\orcid{0000-0002-3800-2357}
\affiliation{%
  \institution{Universitat Politècnica de València}
  \city{València}
  \country{Spain}
  \postcode{46022}
}

\renewcommand{\shortauthors}{Escofet P., Ovide A. et al.}

\begin{abstract}
    Quantum computing represents a paradigm shift in computation, offering the potential to solve complex problems intractable for classical computers. Although current quantum processors already consist of a few hundred of qubits, their scalability remains a significant challenge. Modular quantum computing architectures have emerged as a promising approach to scale up quantum computing systems. This paper delves into the critical aspects of distributed multi-core quantum computing, focusing on quantum circuit mapping, a fundamental task to successfully execute quantum algorithms across cores while minimizing inter-core communications. We derive the theoretical bounds on the number of non-local communications needed for random quantum circuits and introduce the Hungarian Qubit Assignment (HQA) algorithm, a multi-core mapping algorithm designed to optimize qubit assignments to cores with the aim of reducing inter-core communications. Our exhaustive evaluation of HQA against state-of-the-art circuit mapping algorithms for modular architectures reveals a $4.9\times$ and $1.6\times$ improvement in terms of execution time and non-local communications, respectively, compared to the best performing algorithm. HQA emerges as a very promising scalable approach for mapping quantum circuits into multi-core architectures, positioning it as a valuable tool for harnessing the potential of quantum computing at scale.
\end{abstract}

\begin{CCSXML}
<ccs2012>
   <concept>
       <concept_id>10010583.10010786.10010813.10011726</concept_id>
       <concept_desc>Hardware~Quantum computation</concept_desc>
       <concept_significance>500</concept_significance>
       </concept>
   <concept>
       <concept_id>10003752.10003809</concept_id>
       <concept_desc>Theory of computation~Design and analysis of algorithms</concept_desc>
       <concept_significance>300</concept_significance>
       </concept>
   <concept>
       <concept_id>10003033.10003068.10003073.10003074</concept_id>
       <concept_desc>Networks~Network resources allocation</concept_desc>
       <concept_significance>300</concept_significance>
       </concept>
 </ccs2012>
\end{CCSXML}

\ccsdesc[500]{Hardware~Quantum computation}
\ccsdesc[300]{Theory of computation~Design and analysis of algorithms}
\ccsdesc[300]{Networks~Network resources allocation}

\keywords{Quantum Computing, Multi-Core Quantum Computing Architecture, Quantum Circuit Mapping}

\maketitle

\section{Introduction}
\label{sec:intro}
Quantum computing has emerged as a new computational paradigm, harnessing the unique properties of quantum mechanics, including superposition and entanglement \cite{nielsen_chuang_2010}, to revolutionize problem-solving. These quantum properties enable quantum computers to perform certain calculations at an unprecedented speed, addressing problems previously considered intractable for classical computers. The potential applications of quantum computing span a wide range of domains, from cryptography through algorithms like Shor's prime factorization \cite{shor_polynomial_1997} to optimized database searches using Grover's algorithm \cite{grover_fast_1996} and even the simulation of complex physical systems \cite{Low2019hamiltonian}.

Despite the promise of quantum computing, the current landscape is characterized by a significant gap between the potential of this technology and its practical realization. Quantum computers rely on various qubit implementation technologies, including superconducting qubits \textcolor{black}{\cite{Nakamura_1999, RevModPhys.93.025005}}, photonic qubits \textcolor{black}{\cite{kok_linear_2007, srivastava_2015_optically}}, quantum dots \textcolor{black}{\cite{imagog_quantum_1999, 10.1063/1.5115814}}, and trapped ions \textcolor{black}{\cite{cirac_quantum_1995, PRXQuantum.2.020343}}. However, irrespective of the qubit technology employed, today's quantum computers are limited to a few hundred qubits \cite{chow_2021_ibm}, far from the million-qubit scale required for tackling real-world problems \cite{preskill_quantum_2018}.

Monolithic single-chip architectures face inherent limitations in scalability due to challenges related to the integration of control circuits and wiring for accessing qubits while maintaining low error rates \cite{NAP25196}. Moreover, increasing the number of qubits within a single processor results in a higher rate of undesirable qubit interactions, leading to issues like crosstalk \cite{ding_systematic_2020}. As a result, scaling up monolithic quantum computers to accommodate a higher number of qubits remains a major challenge, necessitating innovative approaches to overcome this bottleneck.

One promising alternative to the single-core quantum computing architecture is the concept of modular or multi-core quantum processors \cite{Bravyi_2022, jnane_multicore_2022, rodrigo_exploring_2020, smith_scaling_2022, laracuente_modeling_2023}. This approach involves interconnecting multiple moderate size chips or quantum cores (QCores) through classical and quantum-coherent links \cite{gold_entanglement_2021}. By adopting a modular architecture, it becomes feasible to tackle the challenge of scalability while maintaining the benefits of quantum coherence.

However, transitioning from monolithic to multi-core quantum devices introduces multiple new difficulties, with communication between cores standing out as a critical issue. Inter-core communications in multi-core quantum processors are significantly more costly than intra-core communications, leading to complex trade-offs and optimization challenges. This paper addresses the intricate problem of mapping quantum circuits onto multi-core quantum processors, explicitly focusing on minimizing the number of non-local communications, a critical factor in maximizing performance. Few multi-core mapping algorithms have been proposed \cite{baker_time-sliced_2020, bandic_mapping_2023}, and their optimality has not been studied in depth. This work proposes a novel mapping technique, comparing it to existing approaches. More precisely, the contributions of this paper can be summarized as:

\begin{itemize}
    \item We perform a non-local communications characterization for Quantum Random Circuits, in which theoretical upper and lower bounds are obtained. Note that this analysis allows, for the first time, for optimality assessment of different multi-core quantum circuit mapping algorithms.
    \item We propose the Hungarian Qubit Algorithm (HQA), originally presented in \cite{Escofet_2023}, and conduct a design exploration improving the algorithm's performance $1.33\times$ on structured circuits.
    \item We compare different state-of-the-art multi-core mapping algorithms, assessing the number of non-local communications and the execution time. We show that HQA outperforms the execution time and non-local communications of the state-of-the-art best performing algorithm by $4.9\times$ and $1.6\times$, respectively.
\end{itemize}

The remainder of this paper is structured as follows. Section \ref{sec:modular_architectures} provides a brief introduction to modular quantum computing architectures and discusses the challenges of these scalable systems that include the need of developing novel quantum circuit mapping techniques. In Section \ref{sec:non-local_comms}, we delve into the task of distributing quantum states from random quantum algorithms into quantum cores, establishing theoretical upper and lower bounds on the number of non-local communications required. Section \ref{sec:mapping_multi-core} conducts a comprehensive review and analysis of state-of-the-art mapping algorithms designed for multi-core quantum computing architectures, comparing their performance to the previously established theoretical bounds. Section \ref{sec:HQA} introduces a novel multi-core mapping algorithm, the HQA, and evaluates its performance against the theoretical bounds. In Section \ref{sec:methodology}, a series of experiments comparing various multi-core mapping algorithms are presented, assessing their scalability in terms of execution time and non-local communications. Finally, in Section \ref{sec:conclusions}, we discuss the results obtained and outline potential paths for future research in this critical domain.

\section{On modular quantum computing architectures}
\label{sec:modular_architectures}
Current quantum computers have successfully integrated \textcolor{black}{up to a thousand} qubits within a single processor, marking a significant milestone in the field \textcolor{black}{\cite{chow_2021_ibm, gambetta_2023_the}}. However, to address real-world problems effectively, quantum computers must scale to operate with thousands or even millions of qubits \cite{preskill_quantum_2018}. This ambitious objective highlights the need for quantum computers to be scaled, containing more qubits.

Nevertheless, scaling quantum computers to accommodate such a large number of qubits is a challenging task. A major difficulty is the intricate integration of classical control circuits and precise wiring for individual qubit addressability, all while maintaining low error rates \cite{NAP25196}. Additionally, the quest to increase qubit counts must be accomplished without increasing crosstalk or interference among qubits \cite{ding_systematic_2020}. Consequently, expanding monolithic quantum computers to integrate an increasing number of qubits remains a critical challenge that requires the exploration of innovative solutions.

One promising architectural alternative involves the division of the Quantum Processing Unit (QPU) into smaller, more manageable cells known as quantum cores or QCores \cite{Bravyi_2022, jnane_multicore_2022, smith_scaling_2022, laracuente_modeling_2023}. Different levels of modularity are envisioned at increasing system complexity, all requiring the introduction of means for communication. More precisely, for superconducting quantum processors, it was first proposed \cite{Bravyi_2022} to interconnect different chips through classical communication links, with the aim of executing large quantum algorithms (i.e. circuits that require more qubits than there are in a single-core) using circuit cutting and knitting techniques \cite{tang_2021_cutqc, piveteau2023circuit}. Next, in the near term, very short quantum links between adjacent chips (i.e. chip-to-chip quantum connector) will be introduced, allowing for quantum communication through two-qubit gates across processors \cite{gold_entanglement_2021}. As this technology evolves, later developments of multi-core quantum computing architectures envision the interconnection of QCores through longer quantum coherent communication links as well as classical channels, enabling the coupling and entanglement of qubits situated in different cores \cite{gold_entanglement_2021,Qintranet} and ultimately across different quantum computers (i.e. between different fridges). Alternatively, some designs focus on generating and distributing entangled qubit pairs to different cores, making use of quantum and classical communication channels \cite{rodrigo_exploring_2020}.

\begin{figure}
\centering
\begin{subfigure}[t]{0.32\textwidth}
    \includegraphics[width=\textwidth]{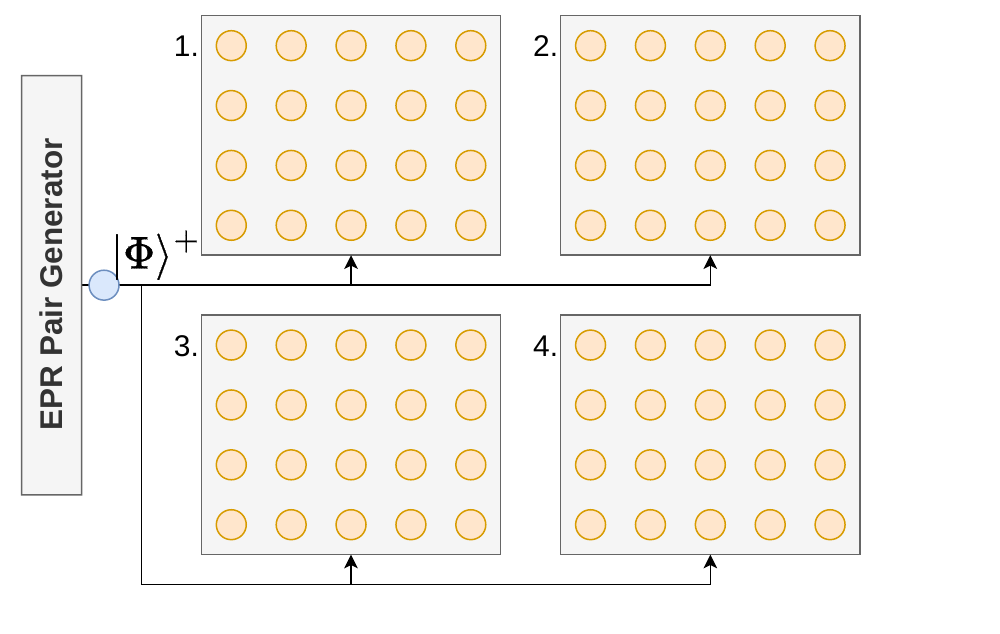}
    \caption{Structure of the architecture, with an entangled state $|\Phi \rangle^+$ at the EPR pair generator.}
    \label{fig:multi_core_1}
\end{subfigure}
\hfill
\begin{subfigure}[t]{0.32\textwidth}
    \includegraphics[width=\textwidth]{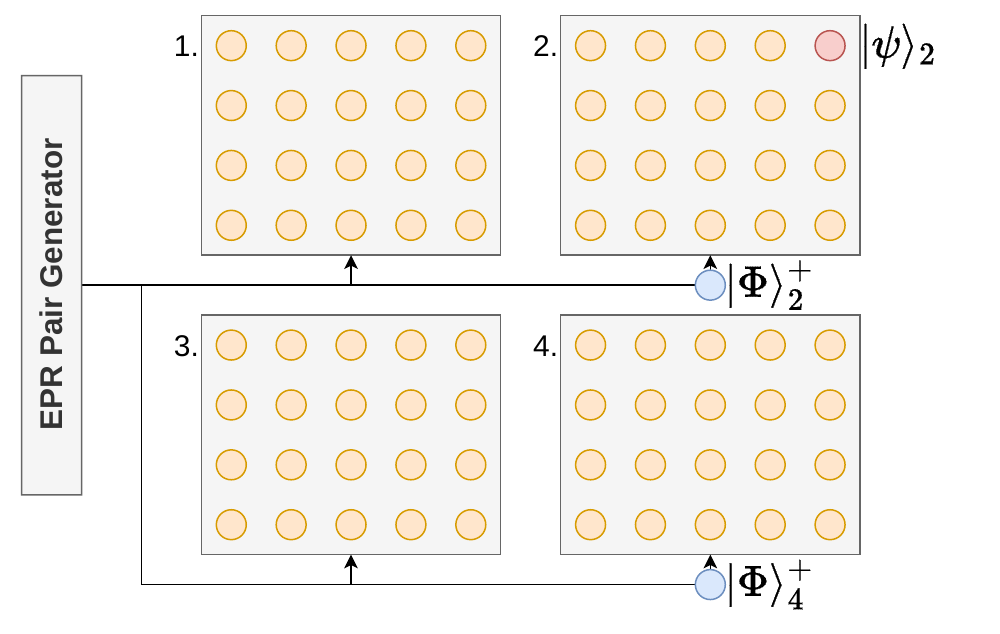}
    \caption{The entangled state is distributed to cores two and four.}
    \label{fig:multi_core_2}
\end{subfigure}
\hfill
\begin{subfigure}[t]{0.32\textwidth}
    \includegraphics[width=\textwidth]{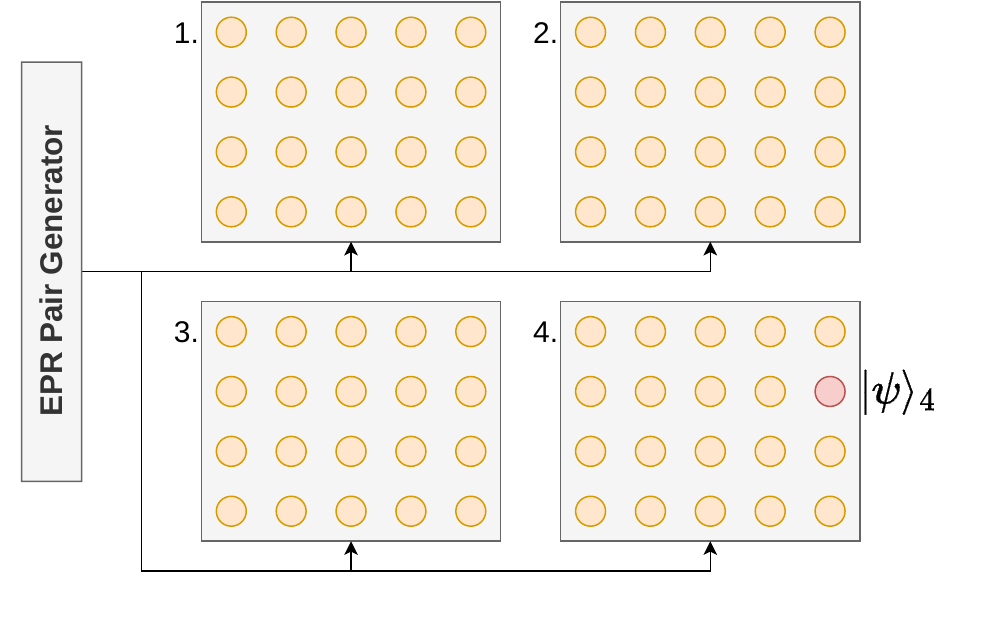}
    \caption{Movement of state $|\psi \rangle$ to core four by using the entangled state.}
    \label{fig:multi_core_3}
\end{subfigure}
\caption{Multi-Core Quantum Computing Architecture based on the distribution of EPR Pairs. \textcolor{black}{Several cores (rectangles) containing qubits (orange circles) are connected to an EPR Pair Generator, in charge of generating and distributing entangled pairs to different cores. Once the entangled states are in the core, they will be used (measured and thus consumed) for communication purposes, enabling the transfer of quantum states across cores.}}
\label{fig:multi_core_architecture}
\end{figure}

In this work, we focus on multi-core quantum computing architectures consisting of several units that can communicate based on the generation and distribution of EPR pairs \textcolor{black}{\cite{PhysRev.47.777}}, as proposed by Rodrigo et al. \cite{rodrigo_exploring_2020}, and illustrated in Figure \ref{fig:multi_core_architecture}. Within this architectural framework, diverse quantum cores are linked to an EPR pair generator through a quantum network, facilitating the distribution of entangled pairs. These entangled states ($|\Phi \rangle^+$) are used to transmit quantum states from one core to another, employing the principles of quantum teleportation \cite{gottesman_1999_demonstrating}, a quantum communication protocol depicted in Figure \ref{fig:teleportation}.

\begin{figure}
\centering
\begin{subfigure}[t]{0.49\textwidth}
    \raisebox{0.35cm}{\includegraphics[width=\textwidth]{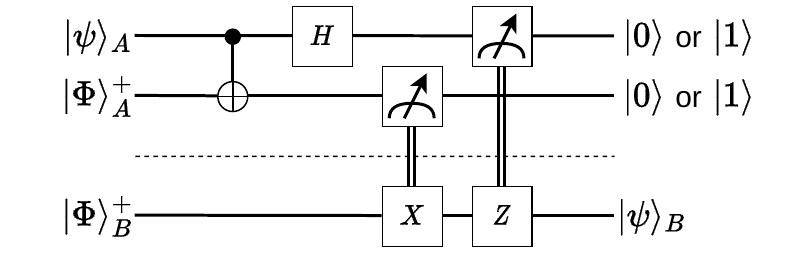}}
    \caption{Teleportation Circuit.}
    \label{fig:teleportation}
\end{subfigure}
\hfill
\begin{subfigure}[t]{0.49\textwidth}
    \includegraphics[width=\textwidth]{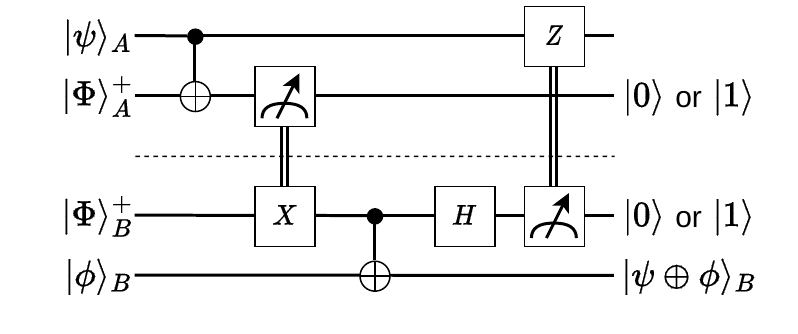}
    \caption{Remote Gate Circuit.}
    \label{fig:remote_gate}
\end{subfigure}
\caption{EPR-based Communication Protocols.}
\label{fig:epr_communications}
\end{figure}

Beyond quantum teleportation, entangled pairs also serve as the primary resource for additional communication protocols, including the execution of remote quantum operations (illustrated in Figure \ref{fig:remote_gate}). These protocols are highly used over long-range quantum networks in Distributed Quantum Computing (DQC) \cite{ferrari_modular_2023}. Note that although remote two-qubit gates can also be used as communications primitives in short-range multi-core architectures, in this work, we focus on quantum circuit mapping algorithms that consider quantum teleportation as a communication means.

\subsection{Challenges for Multi-Core Quantum Computing Architectures}
Multi-core quantum computing architectures represent a promising paradigm for overcoming the limitations associated with scaling monolithic quantum processors to accommodate larger number of qubits. However, this transition has its own difficulties. This section delves into the key problems that must be addressed when designing and implementing multi-core quantum computing systems. These challenges encompass the entire spectrum of quantum computing, from hardware considerations to software and performance evaluation.

\subsubsection{Rethinking the Full-Stack}
Full-stack quantum computing systems have been developed to bridge quantum algorithms with current monolithic quantum processors. However, going to modular architectures will require to redesign such a stack to extend it beyond computation, encompassing also communication. This double full-stack architecture \cite{rodrigo_exploring_2020} necessitates the integration of not only quantum computation elements but also support for classical and quantum communication such as the synchronization and scheduling of quantum/classical information exchange between cores.

\subsubsection{Balancing Computation and Communication Qubits}
In an EPR-based multi-core architecture, qubits must be utilized for both communication and computation, introducing a delicate trade-off. Achieving the right balance between qubits dedicated to computation and those reserved for communication is vital. Overallocation of qubits for communication may limit computational capabilities, while underallocation can restrict the efficient quantum state distribution and manipulation across cores.

\subsubsection{Communication Networks}
The establishment of robust quantum/classical communication networks is central to the success of multi-core quantum computing architectures \cite{Escofet2023InterconnectFF}. This challenge encompasses the development of technologies for the implementation of quantum coherent links capable of transmitting quantum states with minimal decoherence \cite{Kurpiers_2018}. Additionally, the creation of efficient quantum communication primitives and protocols is essential for orchestrating the seamless exchange of quantum information across cores.

\subsubsection{Benchmarking and Performance Metrics}
How to properly measure the performance of a quantum computer is still an open question. In the last years, there has been an effort to define a set of benchmarks and performance metrics for monolithic quantum processors \cite{cross_2019_validating, tomesh_2022_supermarq}. However, these may not capture the intricacies of modular quantum computing architectures as they are missing the communication part as well as the parallelization ability. More precisely, these metrics should include, for instance, factors such as inter-core quantum state transfer latency, communication overhead, and resource utilization efficiency. Accurate evaluation methods are critical for guiding the design and optimization of multi-core quantum computing systems.

\subsubsection{Quantum Compilers for Multi-Core Quantum Computers}
Multi-core quantum computing architectures introduce a complex compilation landscape. Quantum compilers play a pivotal role not only in translating high-level quantum programs into executable instructions but also in performing some modifications to the quantum circuit to deal with the computing hardware constraints, a process known as mapping\textcolor{black}{, where the circuit is transformed to an equivalent one that complies with the restrictions of the targeted quantum processor}. Compilers for multi-core architectures must consider the intricacies of inter-core communication, \textcolor{black}{virtual} qubit movement, and synchronization. Adapting existing compilation techniques to cater to the distributed nature of multi-core architectures is a non-trivial challenge. Developing quantum compilers capable of optimizing quantum circuits across multiple cores while minimizing non-local communications is an active area of research.

In conclusion, transitioning to multi-core quantum computing architectures is required for scaling up quantum computers, but it comes with a set of formidable challenges involving hardware, communication infrastructure, benchmarking, and software development. Addressing them is essential to unlock the full capabilities of these modular quantum systems and pave the way for the next generation of quantum computing technologies.

\subsection{Mapping of Quantum Circuits}
Linked to the last challenge posed in the previous section, mapping is a critical step in the compilation process for quantum circuits. Prior to their execution, quantum circuits are modified \textcolor{black}{(gate decomposition, circuit optimization, and addition of gates such as SWAPs to route the qubits, among other phases)} to be adapted to the hardware's restrictions.


In monolithic quantum computers, the mapper assigns each quantum state (or virtual qubits) from the circuit to an initial physical qubit within the architecture, illustrated in Figure \textcolor{black}{\ref{fig:initial_placement}}. Additionally, it inserts the necessary operations (mostly SWAP gates) to facilitate the movement of quantum states, ensuring that the hardware's specific coupling constraints are met so that each two-qubit gate can be performed, as depicted in Figure \ref{fig:mapped_circuit}. An example of such coupling topology is depicted in Figure \ref{fig:qc_topology}. The scarce connectivity between qubits is the major limitation of current quantum processors. This intricate mapping process is essential for the circuit to function seamlessly on the target quantum processor. Some examples of quantum mappers for monolithic quantum computers are \cite{li_2019_tackling, Qiskit, Amy_2020, lao_2022_timing, Sivarajah_2021}.


\begin{figure}
\vspace{-0.5cm}
\centering
\begin{subfigure}[t]{0.13\textwidth}
    \raisebox{0.45cm}{\includegraphics[width=\textwidth]{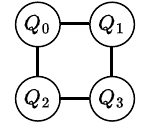}}
    \vspace{-0.8cm}
    \caption{Coupling map of a quantum processor}
    \label{fig:qc_topology}
\end{subfigure}
\hfill
\begin{subfigure}[t]{0.225\textwidth}
     \raisebox{0.15cm}{\includegraphics[width=\textwidth]{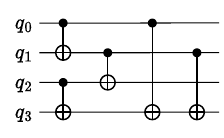}}
    \vspace{-0.8cm}
    \caption{Quantum circuit with 4 (virtual) qubits and 5 two-qubit gates}
    \label{fig:quantum_circuit}
\end{subfigure}
\hfill
\begin{subfigure}[t]{0.275\textwidth}
    \includegraphics[width=\textwidth]{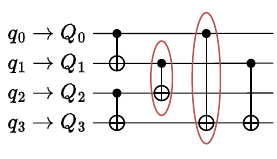}
    \vspace{-0.8cm}
    \caption{Virtual qubits mapped to physical qubits. Unfeasible gates are highlighted}
    \label{fig:initial_placement}
\end{subfigure}
\hfill
\begin{subfigure}[t]{0.33\textwidth}
    \raisebox{0.15cm}{\includegraphics[width=\textwidth]{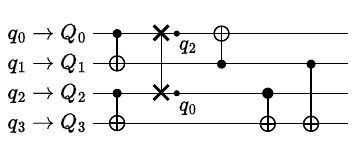}}
    \vspace{-0.8cm}
    \caption{\textcolor{black}{Final mapped circuit. SWAP gates ($\times$) are added to satisfy the connectivity constraints}}
    \label{fig:mapped_circuit}
\end{subfigure}
\vspace{0.3cm}
\caption{Overview of the process of mapping a quantum circuit into the topology of a particular quantum computer.}
\label{fig:monolithic_mapping}
\end{figure}

When going from single-core to multi-core architectures, the mapping problem becomes more
challenging and highly depends on how cores are connected, allowing for some communication
primitives. This work assumes an EPR-distributed architectural model and focuses on distributing quantum states into cores. Along the execution of the circuit, quantum states will be moved from one core to another, ensuring that, every time a two-qubit gate needs to be executed, the involved qubits will be located in the same core.

An example is depicted in Figure \ref{fig:multi-core_mapping}, where a circuit with three timeslices is mapped into a two-core architecture, with two qubits per core. A timeslice is defined as the set of quantum gates from the circuit that can be executed in parallel. In each one of the timeslices, the interacting qubits are located in the same core, ensuring all two-qubit gates will be feasible. The movements across cores will be performed using quantum teleportation (Figure \ref{fig:teleportation}) between timeslices. We refer to these movements across cores as non-local communications\textcolor{black}{, and how these non-local communications are performed depends on how cores are connected among them \cite{Escofet2023InterconnectFF}}.

\begin{figure}[t]
    \includegraphics[width=\textwidth]{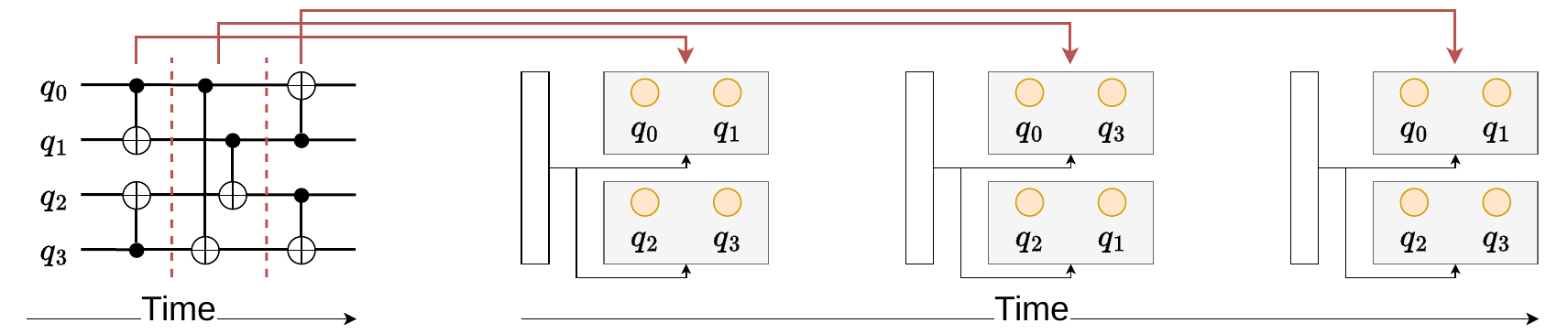}
    \caption{Mapping a 4-qubit quantum circuit (left), into a 2-core Quantum Computing Architecture. For each timeslice (sections of the circuit), we assign qubits into cores, so each two-qubit interaction involves qubits located in the same core.}
    \label{fig:multi-core_mapping}
\end{figure}

\textcolor{black}{In this work’s architectural model, quantum teleportation is used as the inter-core communication protocol, consisting of between four and six quantum gates (depending on the needed corrections). The movement of quantum states across cores will be performed after the mapping, adding the needed quantum gates to perform quantum teleportation. Moreover, a non-local communication requires generating and distributing an entangled pair. In the case where no ancillary qubits are present in the architecture (i.e. the number of virtual qubits in the circuit is the same as the physical qubits in the architecture), at least two qubits devoted to communication per core are required. One will be used as a buffer holding an arriving quantum state, while the other performs the teleportation of a quantum state, freeing space for the other.}

Since non-local communications are much more expensive than intra-core operations, in this work, we focus on the problem of moving qubits across cores, ensuring that, for each two-qubit gate, the involved qubits will be placed in the same core. Such a problem is depicted in Figure \ref{fig:multi-core_mapping} and has been previously studied in \cite{baker_time-sliced_2020} and \cite{bandic_mapping_2023}. Though our architectural model would support Remote Gate execution, for this work, we will only consider qubit movement across cores, making our work analogous to the one in \cite{baker_time-sliced_2020} and \cite{bandic_mapping_2023}.


\section{Non-local Communications Charaterization}
\label{sec:non-local_comms}
Current multi-core mapping algorithms fail to estimate how close the obtained mapping is to the optimal solution. In this section, we focus on Quantum Random Circuits, random algorithms characterized by the number of qubits $q$, the number of gates $g$, and the fraction of two-qubit gates of the circuit $f$. From now on, we will refer to such circuits as $(q, g, f)$-Quantum Random Circuit.

Such circuits are constructed by starting with an empty quantum circuit with $q$ qubits. We then proceed to add $g$ gates to the quantum circuit, each gate has probability $f$ of being a two-qubit gate, and probability $(1-f)$ of being a single-qubit gate. In both cases, the qubits involved in the operation are selected randomly.

\subsection{Non-local Communication Bounds for Random Quantum Circuits}
\label{sec:comms_characterization}

To assess the performance of multi-core mapping algorithms when mapping Quantum Random Circuits, we characterize the number of non-local communications when employing a naive strategy (Theorem \ref{the:naive_upper}), and the lower bound on the number of non-local communications when employing an optimal strategy that does not take into account future qubit interactions (Theorem \ref{the:optimal_bounds}), when mapping a $(q, g, f)$-Quantum Random Circuit into a $(q, N)$-quantum computing architecture (modular architecture with $N$ cores, each of them containing $\frac{q}{N}$ qubits).

\begin{lemma}
For a given $(q, N)$-quantum computing architecture, and a $(q, g, f)$-Quantum Random Circuit, the expected number of qubits involved in unfeasible operations ($q_{unf}$) per timeslice is:

\begin{equation}
    \mathbb{E}(q_{unf})_t = \frac{2(N-1)gfq}{N(q-1)t}
\end{equation}
Where $t$ is the number of timeslices in the circuit, and an unfeasible operation is a two-qubit gate involving qubits that are currently located in different cores.

\label{lemma: expected unfeasible qubits}
\end{lemma}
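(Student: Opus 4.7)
The plan is to decompose the expected count via linearity of expectation into three independent factors: (i) the expected number of two-qubit gates appearing in a single timeslice, (ii) the probability that such a gate is unfeasible, i.e., its two operands lie in different cores, and (iii) the constant factor of $2$, since each unfeasible two-qubit gate contributes exactly two qubits to $q_{unf}$.

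First I would handle (i). Since the circuit contains $g$ gates distributed uniformly over $t$ timeslices, a single timeslice carries $g/t$ gates in expectation, and each gate is independently a two-qubit gate with probability $f$, so the expected number of two-qubit gates per timeslice is $gf/t$. Next I would compute (ii) by a straightforward combinatorial argument: the qubits are partitioned into $N$ cores of size $q/N$, and the pair of operands of a random two-qubit gate is drawn uniformly without replacement from the $q$ qubits. Fixing the first operand in any core, the second operand lies in the same core iff it is one of the remaining $q/N - 1$ qubits of that core among the $q-1$ candidates, giving a same-core probability $\frac{q/N-1}{q-1}$ and hence a different-core probability
\begin{equation*}
1 - \frac{q/N - 1}{q-1} \;=\; \frac{q(N-1)}{N(q-1)}.
\end{equation*}
Multiplying (i), (ii) and the factor $2$ from (iii), linearity of expectation directly yields $\mathbb{E}(q_{unf})_t = \frac{2(N-1)gfq}{N(q-1)t}$, matching the claim.

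The main obstacle is really conceptual rather than technical: pinning down the underlying probabilistic model so that the three factors decouple. Specifically, the argument requires that gates be distributed uniformly across timeslices, that each gate's operands be drawn uniformly at random and independently of the current placement of qubits into cores, and that the indicator ``gate is two-qubit'' be independent of the indicator ``its operands span two cores.'' Once these modelling assumptions are granted --- which are consistent with the definition of a $(q,g,f)$-Quantum Random Circuit given just above the lemma --- the remainder is a one-line combinatorial count and no further estimation is needed.
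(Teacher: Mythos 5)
Your proposal is correct and follows essentially the same route as the paper: both rest on the expected $gf/t$ two-qubit gates per timeslice and the same-core probability $\frac{q/N-1}{q-1}$, combined by linearity of expectation. The only difference is bookkeeping --- you count unfeasible \emph{gates} and multiply by two qubits each, whereas the paper counts per \emph{qubit} (probability a given qubit is involved in an unfeasible gate, summed over all $q$ qubits) --- and the two tallies coincide.
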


\begin{proof}
    We begin by assessing the probability that, for a specific two-qubit gate $cx(q_i, q_j)$, the qubits involved, $q_i$ and $q_j$, are located within the same quantum core.

Let $\mathcal{Q}$ be the set of qubits in the whole architecture and $\mathcal{C}(q_i)$ be the set of qubits in the same core as $q_i$. We define the probability of $q_i$ and $q_j$ of being in the same core as the number of qubits different than $q_i$, that are in the same core as $q_i$ \textcolor{black}{(i.e. }$\frac{q}{N}-1$), over the total number of qubits different than $q_i$ \textcolor{black}{(i.e.} $q-1$):

\begin{equation}
    \frac{|\{q_k \in \mathcal{C}(q_i) : q_k \neq q_i\}|}{|\{q_k\in \mathcal{Q} : q_k \neq q_i\}|} = \frac{\frac{q}{N}-1}{q-1}
\end{equation}

As the circuit contains a total of $g$ gates, and the fraction of two-qubit gates is $f$, the expected number of two-qubit gates is given by $g\cdot f$. Let $t$ be the number of timeslices the circuit can be sliced into. As two-qubit gates are randomly distributed across the whole circuit, the expected number of two-qubit gates in each timeslice is:

\begin{equation}
    \mathbb{E}(g_{2q})_t = \frac{g \cdot f}{t}
\end{equation}

By the definition of timeslice, each qubit interacts at most one time in each timeslice, Therefore, the expected number of qubits involved in a two-qubit operation per timeslice is:

\begin{equation}
    \mathbb{E}(q_{inv})_t = 2 \frac{g \cdot f}{t}
\end{equation}

For a given timeslice, the probability of $q_i$ being involved in a two-qubit operation is:

\begin{equation}
    P(q_i \in q_{inv})_t = \frac{2\frac{g \cdot f}{t}}{q}
    \label{eq:qubits_involved_two_qubit_gate}
\end{equation}

Without loss of generality, let us see the possible scenarios of $q_1$ when going from timeslice $t_{i-1}$ to timeslice $t_i$:
\begin{itemize}
    \item $q_1$ interacts in $t_i$ with a qubit $q_k$ \textcolor{black}{in the same core it’s currently in}. In that scenario, no non-local communications are needed, as both interacting qubits are already located in the same core. Here $\mathcal{C}_{t_{i}}(q_1)$ represents the core where $q_1$ is located in timeslice $t_i$.

\begin{equation}
    \mathcal{C}_{t_{i-1}}(q_1) = \mathcal{C}_{t_{i-1}}(q_k) = \mathcal{C}_{t_i}(q_1) = \mathcal{C}_{t_i}(q_k)
\end{equation}

    The probability for the first scenario is given by the probability of $q_1$ interacting in timeslice $t_i$, expressed in Equation (\ref{eq:qubits_involved_two_qubit_gate}), times the number of qubits in $\mathcal{C}_{t_{i-1}}(q_1)$ different than $q_1$, over the number of all qubits different than $q_1$:
\begin{equation}
    P(q_1 \in q_{int} \: \& \: q_1 \notin q_{unf})_t = \frac{2\frac{g \cdot f}{t}}{q} \cdot \frac{\frac{q}{N}-1}{q-1}
\end{equation}
    \item $q_1$ interacts in $t_i$ with a qubit $q_k$ in a different core than it currently is.

\begin{equation}
    \mathcal{C}_{t_{i-1}}(q_1) \neq \mathcal{C}_{t_{i-1}}(q_k) \qquad \qquad \mathcal{C}_{t_i}(q_1) = \mathcal{C}_{t_i}(q_k)
\end{equation}

    The probability for this second scenario is given by the probability of $q_1$ interacting in timeslice $t_i$, expressed in Equation (\ref{eq:qubits_involved_two_qubit_gate}), times the number of qubits not in $\mathcal{C}_{t_{i-1}}(q_1)$, over the number of all qubits different than $q_1$:
    
\begin{equation}
    P(q_1 \in q_{int} \: \& \: q_1 \in q_{unf})_t = \frac{2\frac{g \cdot f}{t}}{q} \cdot \frac{(N-1)\frac{q}{N}}{q-1}
\label{eq:second_scenario}
\end{equation}

    \item $q_1$ does not interact in $t_i$, and therefore no non-local communications are needed. It could happen that $q_1$ is moved to another core to make room for an arriving qubit that needs to interact in the core, but this non-local communication has already been taken into account for the interacting qubit.
\end{itemize}
Therefore, at timeslice $t_i$, the probability of $q_1$ of being involved in an unfeasible two-qubit gate is given by the second scenario, described in Equation(\ref{eq:second_scenario}).

Generalizing to all qubits, we obtain the expected number of qubits involved in unfeasible operations ($q_{unf}$) per timeslice:

\begin{equation}
    \mathbb{E}(q_{unf})_t = q \cdot \frac{2\frac{g \cdot f}{t}}{q} \cdot \frac{(N-1)\frac{q}{N}}{q-1} = \frac{2(N-1)gfq}{N(q-1)t}
\end{equation}

\end{proof}

\begin{theorem}
\label{the:naive_upper}
For a given $(q, N)$-quantum computing architecture and a $(q, g, f)$-Quantum Random Circuit, the number of non-local communications when employing a naive assignation is upper bounded by:

\begin{equation}
    \texttt{non-local comms} \leq \frac{2(N-1)gfq}{N(q-1)}
\end{equation}
\end{theorem}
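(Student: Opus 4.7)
The plan is to derive this bound as a direct telescoping consequence of Lemma \ref{lemma: expected unfeasible qubits} by summing the per-timeslice expectation over all $t$ timeslices, using linearity of expectation. The key observation is that the factor of $t$ appearing in the denominator of $\mathbb{E}(q_{unf})_t$ will cancel precisely when we aggregate across the $t$ timeslices of the circuit, producing the $t$-independent expression in the statement.

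First, I would pin down what \emph{naive assignation} means in this context: at every timeslice, no look-ahead is performed, and for each two-qubit gate whose operands currently reside in distinct cores, one of the two involved qubits is teleported into the core of its partner. Under this policy, the number of non-local communications incurred at timeslice $t_i$ is bounded above by $q_{unf}$ at that timeslice, since each unfeasible qubit is charged at most one teleportation. By linearity of expectation, the expected total number of non-local communications is at most $\sum_{i=1}^{t} \mathbb{E}(q_{unf})_{t_i}$.

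Next, I would plug in the per-timeslice expression from Lemma \ref{lemma: expected unfeasible qubits}. Since the distribution of two-qubit gates across timeslices is uniform under the Quantum Random Circuit model, each of the $t$ terms equals $\frac{2(N-1)gfq}{N(q-1)t}$. Summing the $t$ identical contributions causes the $t$ in the denominator to cancel with the number of summands, yielding exactly the stated bound $\frac{2(N-1)gfq}{N(q-1)}$. I would close by emphasizing that this is an upper bound rather than an equality because (i) the naive policy is generally suboptimal, (ii) a single teleportation may coincidentally resolve multiple future unfeasible gates, and (iii) the count of unfeasible qubits itself overestimates the number of gates they participate in by at most a factor of two.

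The main obstacle, and really the only subtle point, is justifying the ``one comm per unfeasible qubit'' charging scheme cleanly: one must make sure not to double-count the two endpoints of the same unfeasible gate, yet still obtain the stated bound rather than half of it. This is handled by keeping the argument explicitly as an inequality --- the naive strategy might in principle move both endpoints (e.g.\ to a third core holding other partners of a larger interaction pattern), so charging one communication to each unfeasible qubit is a valid, though possibly loose, upper bound and matches the coefficient $2(N-1)$ inherited from Lemma \ref{lemma: expected unfeasible qubits} without any factor-of-two adjustment.
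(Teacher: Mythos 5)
Your overall structure matches the paper's proof exactly: take the per-timeslice expectation from Lemma \ref{lemma: expected unfeasible qubits}, bound the communications in each timeslice, and sum over the $t$ timeslices so that the $t$ in the denominator cancels. The arithmetic and the final expression are right. But there is a genuine gap in the one place you yourself flag as the only subtle point: the justification of the coefficient. The paper's naive policy does \emph{not} charge one teleportation to each unfeasible qubit; it charges \emph{two} teleportations to each unfeasible \emph{gate}, and the second one exists for a specific structural reason you never invoke --- the cores are at full capacity (every physical qubit holds a circuit state), so teleporting $q_a$ into $q_b$'s core is impossible until some resident qubit is first evicted to make room. That forced eviction is the second non-local communication, and it is what makes the factor of $2$ per gate (equivalently, $1$ per unfeasible qubit) the correct count rather than an arbitrary slack factor.

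As written, your policy (``one of the two involved qubits is teleported into the core of its partner'') is not executable on a full architecture, and if it were, it would cost one communication per unfeasible gate, i.e.\ $q_{unf}/2$ per timeslice --- which is the \emph{lower} bound of Theorem \ref{the:optimal_bounds}, not the quantity you need. Your patch in the final paragraph (the policy ``might in principle move both endpoints to a third core'') does make the inequality formally safe, but it attributes the factor of $2$ to a scenario that is not part of the policy you defined, rather than to the capacity constraint that actually forces it. To close the gap, replace the per-qubit charging scheme with the paper's per-gate accounting: $q_{unf}/2$ unfeasible gates per timeslice, at most two teleportations each (one to move an endpoint, one to evict a randomly chosen resident and free a slot), giving at most $q_{unf}$ communications per timeslice and hence the stated bound after summing over timeslices.
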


\begin{proof}
    From Lemma \ref{lemma: expected unfeasible qubits}, we know at each timeslice $\frac{2(N-1)gfq}{N(q-1)t}$ qubits are expected to be involved in unfeasible two-qubit gates. Since each qubit is involved in at most one two-qubit gate, and each two-qubit gate involves two qubits, we have a total of $\frac{(N-1)gfq}{N(q-1)t}$ unfeasible two-qubit gates at each timeslice.

    For each unfeasible gate $cx(q_a, q_b)$, the naive approach will use two non-local communications, one to send $q_a$ to $q_b$'s current core, and one to make space for $q_a$ in the destination core, as cores have a fixed size, and, all the physical qubits the architecture hold a quantum state from the circuit.

    Therefore, from timeslice $t_i$ to timeslice $t_{i+1}$, the expected number of non-local communications caused by the $\frac{(N-1)gfq}{N(q-1)t}$ unfeasible two-qubit gates is upper-bounded by two non-local communications for each unfeasible two-qubit gate:

\begin{equation}
    \texttt{non-local comms}_t \leq 2 \cdot \frac{(N-1)gfq}{N(q-1)t} = \frac{2(N-1)gfq}{N(q-1)t}
\end{equation}

When generalizing for all $t$ timeslices, the number of non-local communications for a $(q, N)$-quantum computing architecture and a $(q, g, f)$-Quantum Random Circuit when using a naive mapping algorithm is upper bounded by:

\begin{equation}
    \texttt{non-local comms} \leq t \cdot \frac{2(N-1)gfq}{N(q-1)t} = \frac{2(N-1)gfq}{N(q-1)}
\end{equation}
\end{proof}

It may happen that when making space for $q_a$ to arrive, the moved qubit (which is chosen randomly) was also a qubit involved in an unfeasible two-qubit gate, and it is moved to the core where the other interacting qubit was. When this happens, two unfeasible two-qubit gates are corrected using just two non-local communications (instead of four, assumed in the previous \textit{proof}), obtaining an average of just one non-local communication per unfeasible two-qubit gate. 

This is why the proposed bound in Theorem \ref{the:naive_upper} is indeed an upper bound and a Naive approach could obtain a lower number of non-local communications. The same reasoning is applied to pose the following Theorem.

\begin{theorem}
\label{the:optimal_bounds}
For a given $(q, N)$-quantum computing architecture, and a $(q, g, f)$-Quantum Random Circuit, the optimal number of non-local communications without using future qubit interactions is lower bounded by:

\begin{equation}
    \frac{(N-1)gfq}{N(q-1)} \leq \texttt{non-local comms}
\end{equation}
\end{theorem}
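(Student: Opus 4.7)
The plan is to mirror the derivation used for Theorem \ref{the:naive_upper}, but to replace the two-communications-per-unfeasible-gate charge with a one-communication-per-unfeasible-gate charge, motivated by the ``pair-matching'' observation stated in the paragraph immediately preceding Theorem \ref{the:optimal_bounds}. The starting point is Lemma \ref{lemma: expected unfeasible qubits}, which gives $\mathbb{E}(q_{unf})_t = \frac{2(N-1)gfq}{N(q-1)t}$ per timeslice. Dividing by two (each unfeasible two-qubit gate contributes exactly two endpoints) yields an expected $\frac{(N-1)gfq}{N(q-1)t}$ unfeasible two-qubit gates per timeslice.

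Next I would establish a per-timeslice lower bound of one non-local communication per unfeasible gate. The key observation is that resolving an unfeasible gate $cx(q_a, q_b)$ demands that at least one of $q_a$ or $q_b$ be teleported into the core of the other, so at least one non-local communication must be charged to that gate. In the most favorable case, making room at the destination core is accomplished by displacing a qubit that itself needed to cross — exactly the scenario sketched after Theorem \ref{the:naive_upper} — so two unfeasible gates can be simultaneously resolved with two non-local communications. Even under this optimal pairing, the amortized cost per unfeasible gate cannot drop below one, which yields a per-timeslice lower bound of $\frac{(N-1)gfq}{N(q-1)t}$.

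Summing over all $t$ timeslices collapses the $t$ factor and delivers the claimed bound
\begin{equation}
\texttt{non-local comms} \geq t \cdot \frac{(N-1)gfq}{N(q-1)t} = \frac{(N-1)gfq}{N(q-1)}.
\end{equation}

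The main obstacle will be making the ``at least one non-local communication per unfeasible gate'' charge genuinely lower-bounding without appealing to look-ahead. I would need to argue that a strategy confined to the current timeslice's interactions cannot merge more than two unfeasible gates into a single teleportation sequence, since each qubit crossing a core boundary is a physically distinct teleportation event and cannot be shared across multiple gates. The subtlety is that while individual realizations of a random circuit might permit fortunate cancellations below the pair-matching floor, the expectation is still governed by Lemma \ref{lemma: expected unfeasible qubits}, so the averaged one-per-gate charge stands as a bona fide lower bound for any greedy, non-clairvoyant mapper.
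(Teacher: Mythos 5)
Your proposal follows essentially the same route as the paper's proof: take the expected number of unfeasible two-qubit gates per timeslice from Lemma \ref{lemma: expected unfeasible qubits}, charge at least one non-local communication per unfeasible gate (with the best case being the pair-matching scenario where the displaced qubit itself needed to cross), and sum over all $t$ timeslices. Your added discussion of why the one-per-gate charge cannot be beaten by a non-clairvoyant mapper is a slightly more careful justification than the paper offers, but it does not change the argument.
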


\begin{proof}
    Similar to Theorem \ref{the:naive_upper}, the lower bound on the number of non-local communications is obtained from the number of unfeasible gates, proposed in Lemma \ref{lemma: expected unfeasible qubits}. However, in this case, no extra communication to make space in the destination core will be needed, assuming the destination core contains a qubit that is also involved in an unfeasible two-qubit gate.

    Therefore, from timeslice $t_i$ to timeslice $t_{i+1}$, the minimum number of non-local communications caused by the $\frac{(N-1)gfq}{N(q-1)t}$ unfeasible two-qubit gates is just one non-local communication for each unfeasible two-qubit gate:

\begin{equation}
    \texttt{non-local comms}_t \geq 1 \cdot \frac{(N-1)gfq}{N(q-1)t} = \frac{(N-1)gfq}{N(q-1)t}
\end{equation}

When generalizing for all $t$ timeslices, the number of non-local communications for a $(q, N)$-quantum computing architecture and a $(q, g, f)$-Quantum Random Circuit without using future qubit interactions is lower bounded by:

\begin{equation}
    \texttt{non-local comms} \geq t \cdot \frac{(N-1)gfq}{N(q-1)t} = \frac{(N-1)gfq}{N(q-1)}
\end{equation}

\end{proof}

This last scenario is encountered when all cores have an even number of qubits involved in an unfeasible two-qubit gate. Within this scenario, only qubits involved in those unfeasible gates will be moved, needing just one teleportation per operation. However, it depends on the mapping algorithm to identify such optimal movements.

The bounds proposed in Theorem \ref{the:naive_upper} and Theorem \ref{the:optimal_bounds} are plotted in Figure \ref{fig:non-local_comms_bounds}, for different circuit sizes, and three different two-qubit gate fractions. Figure \ref{fig:bounds_weak} depicts how the communication bounds vary for a fixed-size architecture with 120 qubits when partitioning it into an increasing number of cores. On the other hand, Figure \ref{fig:bounds_strong} depicts the non-local communications needed when adding 10-qubit cores to the architecture, using a circuit with size depending on the number of qubits of the architecture ($g = 20q$).

\begin{figure}
\centering
\begin{subfigure}[t]{0.48\textwidth}
    \includegraphics[width=\textwidth]{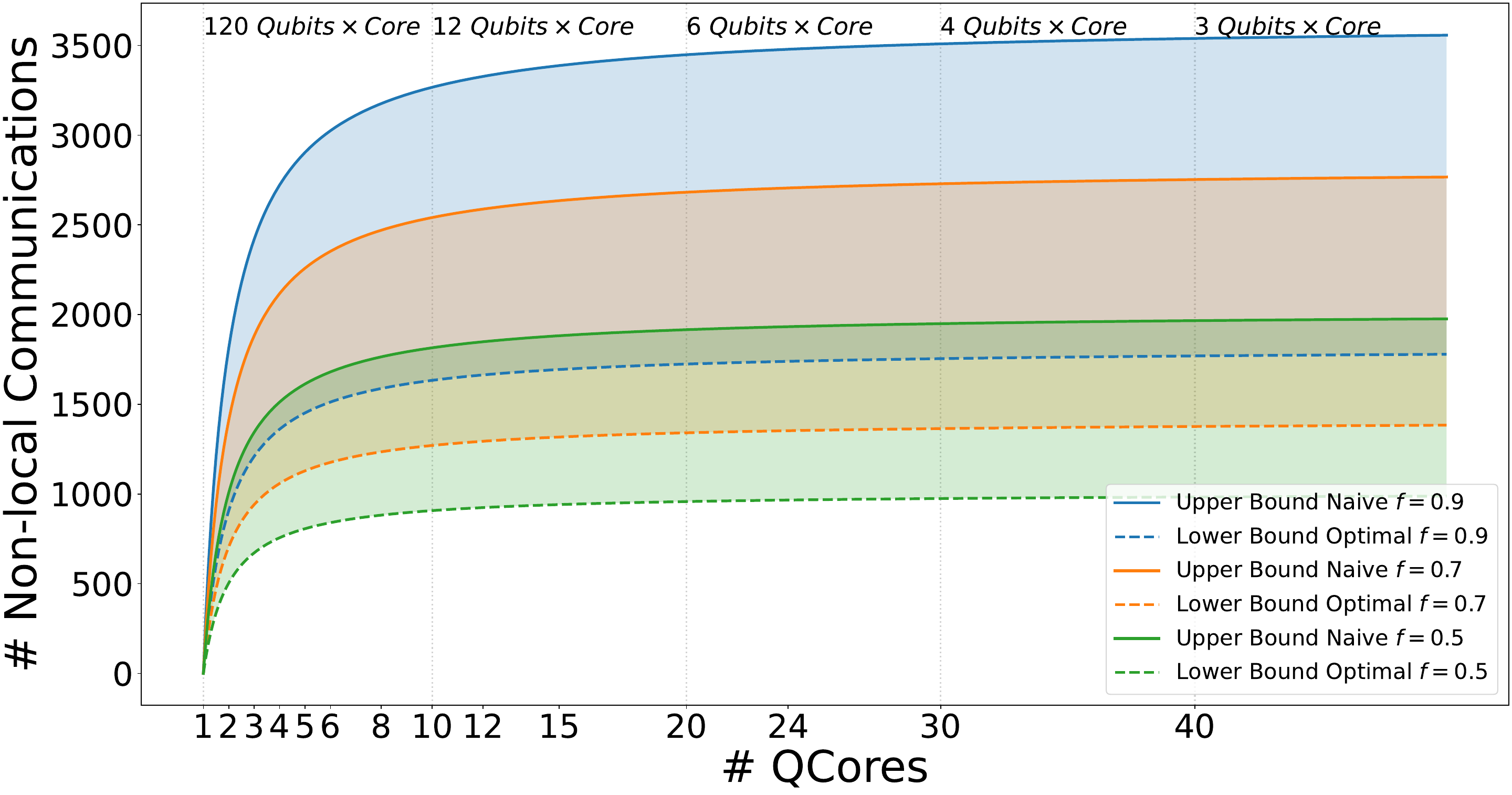}
    \caption{$(120, 2000, f)-$Quantum Random Circuit communication bounds when increasing the number of cores in a fixed-size architecture with 120 qubits.}
    \label{fig:bounds_weak}
\end{subfigure}
\hfill
\begin{subfigure}[t]{0.48\textwidth}
    \includegraphics[width=\textwidth]{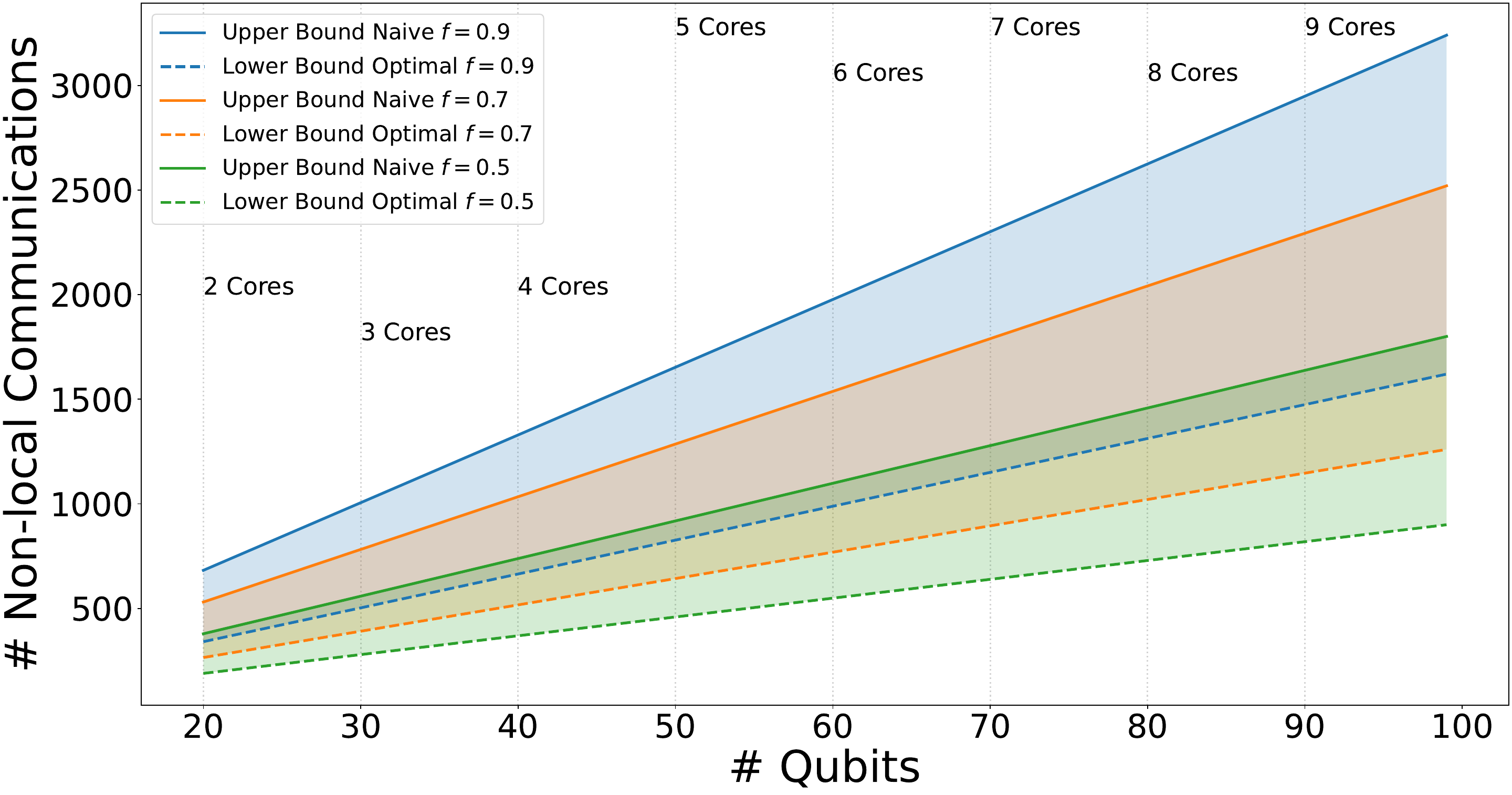}
    \caption{$(q, 20q, f)-$Quantum Random Circuit communication bounds increasing the number of qubits and cores, with a fixed core size of 10 qubits per core.}
    \label{fig:bounds_strong}
\end{subfigure}
\caption{Non-local communications bounds for $(q, g, f)-$Quantum Random Circuits}
\label{fig:non-local_comms_bounds}
\end{figure}

It can be seen how the non-local communications when increasing the number of cores (Figure \ref{fig:bounds_weak}) rapidly increase from going to a monolithic quantum computer (1 core) to a multi-core with a few cores but then stabilizes, showing that, at some point, increasing the number of cores, and thus decreasing the number of qubits per core, has a minimum impact on the number of non-local communications. Figure \ref{fig:bounds_strong} shows that both communication bounds have a linear growth when increasing the architecture's number of cores and the circuit's size.

These bounds provide the first two reliable metrics on the optimal number of non-local communications for quantum random circuits. However, the lower bound proposed in Theorem \ref{the:optimal_bounds} can be further improved by considering future qubit interactions. Therefore, mapping algorithms that take into account qubit interactions to optimize future movements have the potential to achieve a lower number of non-local communications than the one proposed.

\subsection{Naive Mapping Approach}
In order to validate the proposed bounds (Theorem \ref{the:naive_upper} and \ref{the:optimal_bounds}), we propose a Naive mapping algorithm, which, whenever an unfeasible two-qubit gate is encountered, the involved qubits are moved together into one of both involved cores, by randomly making space in the destination core. Therefore, a qubit that was not involved in an unfeasible two-qubit gate will be moved to make space for the arriving qubit. Such an algorithm is described in Algorithm \ref{alg:naive_mapping}.

\begin{algorithm}
\caption{Naive Mapping Algorithm}
\label{alg:naive_mapping}
\KwData{Circuit Timeslice's  $Ts$}
\KwResult{Valid Qubit to Core assignments $As$}
$As[0] \gets $ Random Initial Assignment\;
\For{$T \in Ts$}{ 
    $A_T \gets $ Current Assignment\;
    \For {$(q_A, q_B) \in T$} {
        \If{$A_T[q_A] \neq A_T[q_B]$}{
            $q_{aux} \gets $ Random Qubit from $q_A$'s core\;
            $A_T[q_{aux}] = A_T[q_B]$\;
            $A_T[q_B] = A_T[q_A]$\;
        }
    }
    $A_{T+1} \gets A_T$\;
}
\end{algorithm}

When mapping a $(q,g,f)-$Quantum Random Circuit into a modular architecture using the Naive mapping algorithm proposed above, we expect the number of communications to be lower than the upper bound proposed in Theorem \ref{the:naive_upper}, as some movements to make space for the incoming qubits, will place previously unfeasible qubits, into the right core.

A comparison between the proposed bounds and the Naive mapping approach proposed in Algorithm \ref{alg:naive_mapping} is depicted in Figure \ref{fig:naive_experiments}, where three different Quantum Random Algorithms are mapped to different modular architectures, and the number of non-local communications obtained are compared to the communication bounds proposed in Section \ref{sec:comms_characterization}.

In each experiment, the needed communications are obtained after averaging the communications obtained on twenty different random circuits with the same configuration. The maximum and minimum values obtained in each scenario are shown in the shadowed area.

\begin{figure}
\centering
\begin{subfigure}[t]{0.325\textwidth}
    \includegraphics[width=\textwidth]{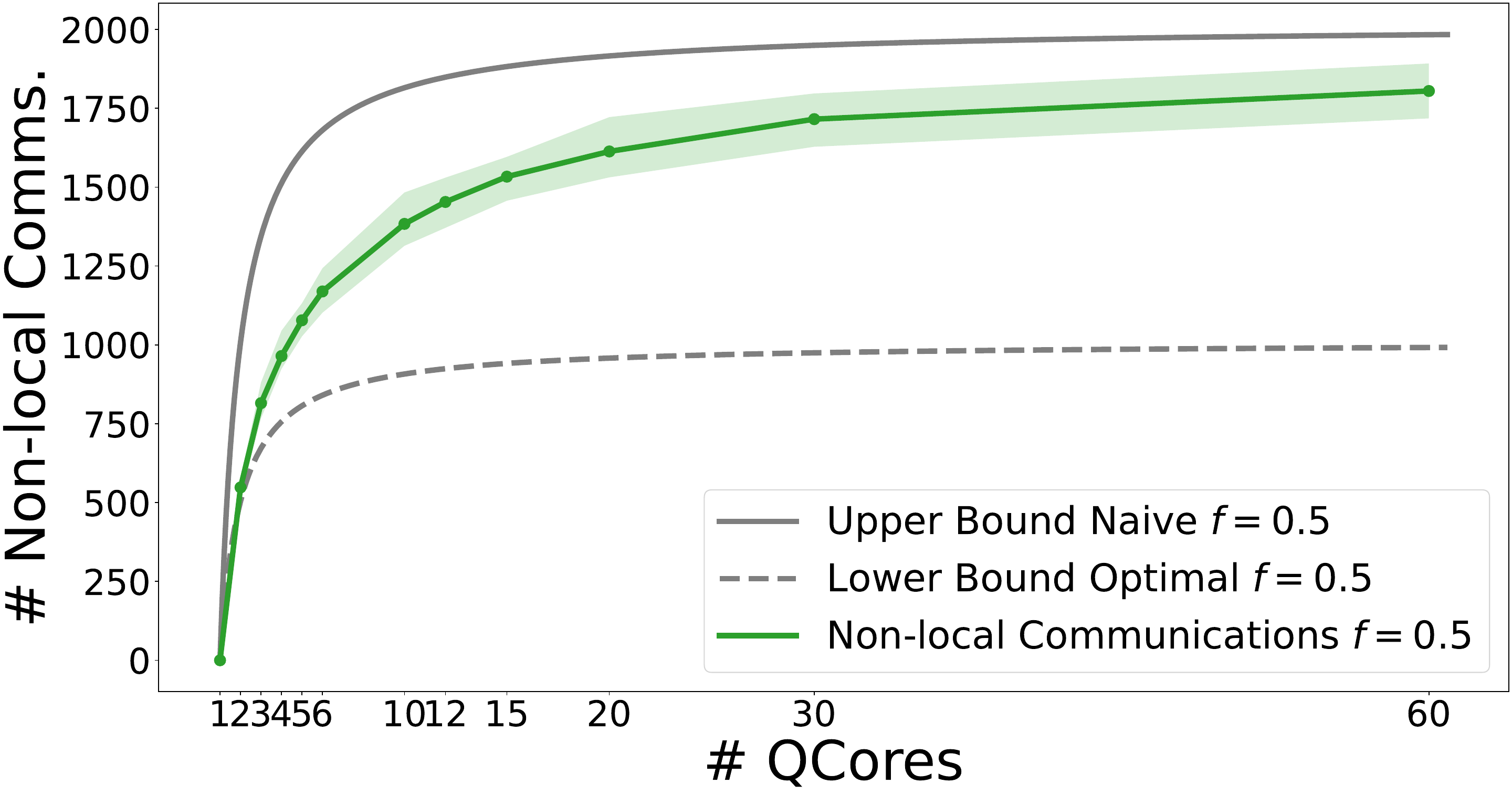}
    \caption{$(120, 2000, 0.5)-$Random Circuit Naive mapping.}
    \label{fig:naive_05}
\end{subfigure}
\hfill
\begin{subfigure}[t]{0.325\textwidth}
    \includegraphics[width=\textwidth]{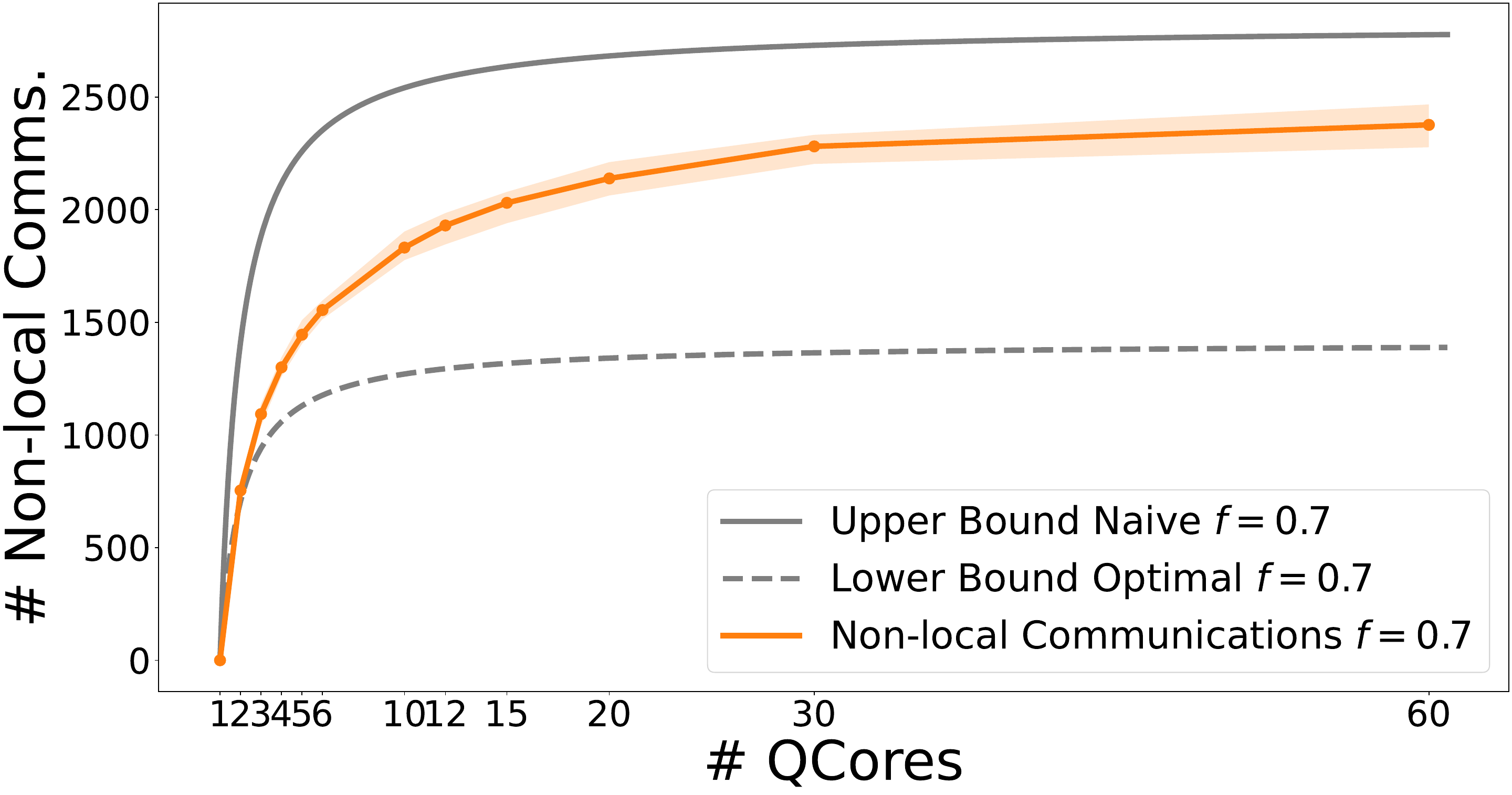}
    \caption{$(120, 2000, 0.7)-$Random Circuit Naive mapping.}
    \label{fig:naive_07}
\end{subfigure}
\hfill
\begin{subfigure}[t]{0.325\textwidth}
    \includegraphics[width=\textwidth]{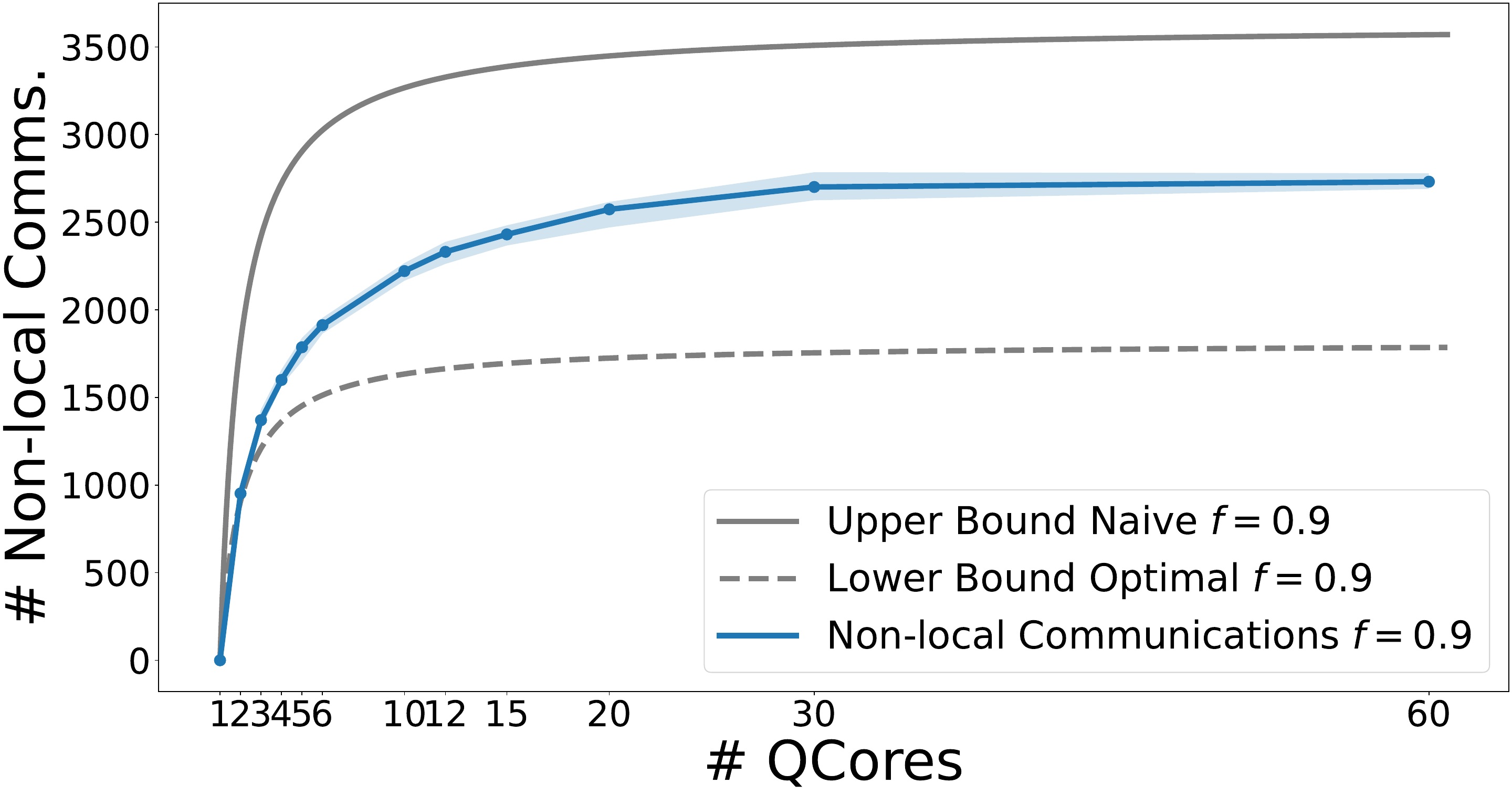}
    \caption{$(120, 2000, 0.9)-$Random Circuit Naive mapping.}
    \label{fig:naive_09}
\end{subfigure}
\caption{Non-local communications when using the Naive Mapping Algorithm for different $(120, 2000, f)-$Quantum Random Circuits. Communication bounds are shown in grey.}
\label{fig:naive_experiments}
\end{figure}

Figure \ref{fig:naive_experiments} shows that the number of non-local communications needed is closer to the upper bound when the two-qubit gate frequency is lower. This is due to the movement of random qubits to make space for the arriving ones. If the selected qubit was involved in an unfeasible two-qubit gate in that same timeslice, we would have moved it anyway, and therefore, the movement to make space has no impact on the number of non-local communications. As the two-qubit gate fraction increases, it is more likely that the selected random qubit is involved in an unfeasible two-qubit gate, achieving a number of non-local communications further from the upper bound (Figure \ref{fig:naive_09}) than when the two-qubit gate fraction is small (Figure \ref{fig:naive_05}).

In future sections, we will use this Naive method as a baseline for the non-local communications needed to map quantum algorithms into multi-core architectures.

\section{Multi-Core Mapping Algorithms Analysis}
\label{sec:mapping_multi-core}
This section reviews the inter-core mapping algorithms proposed in \cite{baker_time-sliced_2020}, called Fine Grained Partitioning (FGP-rOEE), and the algorithm proposed in \cite{bandic_mapping_2023}, based on Quadratic unconstrained binary optimization (QUBO) problem. These two algorithms are some of the scarce algorithms proposed so far to solve the mapping problem for multi-core quantum computers.

\textcolor{black}{Other quantum circuit mapping algorithms for distributed quantum computing have been proposed \cite{ferrari_modular_2023, wu_autocomm_2022, andresmartinez2023distributing}. They make use of  EPR pairs for only remote operations across cores (i.e. telegate) \cite{10.1145/3579367} or also for qubit teleportation (i.e. teledata). In addition, other mapping algorithms such as \cite{zhang2023compilation} are developed to target chiplet architectures \cite{smith_scaling_2022}, a different type of modular architectures than the ones considered in this work. In this work, we only focus on qubit distribution in modular multi-core architectures, in which processors are connected with classical and quantum links, not taking into account remote operations or a different type of processor architecture.}


\subsection{FGP-rOEE}
Baker et al. \cite{baker_time-sliced_2020} proposed the FGP-rOEE algorithm to tackle the mapping problem in multi-core quantum computers. The algorithm takes as inputs a quantum circuit with $q$ qubits and an architecture with $N$ cores, each core accommodating $\frac{q}{N}$ qubits. The algorithm operates under two fundamental assumptions:  i) The coupling map of each core is all-to-all; i.e., all qubits in a core have direct connections to one another, allowing the direct execution of a two-qubit gate if both qubits are located within the same core. ii) Cores are interconnected all-to-all, enabling the exchange of quantum states between any pair of cores.

The primary objective of the algorithm is to obtain a sequence of qubit-to-core assignments, with one assignment per timeslice. The quantum circuit is separated into these timeslices, and a valid assignment of qubits to cores is found for each one. A valid assignment must have every pair of interacting qubits in the timeslice assigned to the same core, ensuring that no two-qubit gate involves qubits located in different cores.

For a given timeslice $t$, FGP-rOEE computes the interaction graph of that timeslice, a graph with the qubits as nodes and an edge between two nodes if the qubits interact with each other in future timeslices. The edges are weighted, representing the immediacy of the interaction. These weights are called look-ahead weights and are computed using Equation (\ref{look-ahead weights}), where $I(m, q_i,q_j) = 1$ if qubits $q_i$ and $q_j$ interact at timeslice $m$, and the exponential decay function ($2{-x}$) is used so nearby timeslices have more impact on the look-ahead weights than latter ones.

\begin{equation}
    w_t(q_i, q_j) = \sum_{t < m \leq T} I(m, q_i,q_j) \cdot 2^{-(m-t)}
    \label{look-ahead weights}
\end{equation}

For qubits that interact exactly at timeslice $t$, a weight of infinity is set to the edges. This weight implies that, at that particular timeslice, these qubits must unequivocally reside within the same core, thus making their separation impossible.

Next, a $k$-partitioning algorithm is employed to partition the interaction graph into $k$ disjoint subsets of nodes. Here, $k$ corresponds to the number of cores $N$, and it is imperative that all partitions have the exact same size, as cores have a fixed size, and we can only assign $\frac{q}{N}$ qubits into each core. Due to these strict constraints, the set of suitable $k$-partitioning algorithms becomes notably limited.

In \cite{baker_time-sliced_2020}, the use of the Overall Extreme Exchange (OEE) algorithm \cite{PARK1995899} is proposed. The OEE algorithm builds upon the Kernighan–Lin \cite{kernighan_1970_efficient} algorithm and expands its capabilities. In their work, Baker et al. introduce a variant of the OEE algorithm known as rOEE, or relaxed Overall Extreme Exchange. Similar to the OEE algorithm, the rOEE also starts with an assignment and performs exchanges of nodes until a valid partition is reached (i.e. all interacting qubits are in the same partition).

This procedure is repeated for every pair of timeslices, using the previous assignment of qubits to cores as input for the rOEE algorithm. With this, a path of valid assignments is found over the whole circuit. A detailed analysis of the FGP-rOEE algorithm can be found in \cite{ovide2023mapping}.

\subsection{QUBO}
Bandic et al. \cite{bandic_mapping_2023} proposed a mapped solution problem for multi-core or modular architectures based on the Quadratic Unconstrained Binary Optimization (QUBO) \cite{original_QUBO} method, relying on prior subgraph isomorphism approaches \cite{subgraph_isomorphism}, and single-core solutions \cite{dury2020qubo}. This approach addresses qubit allocation and inter-core communication costs through binary decision variables, being suitable for different modular architectures. QUBO introduces a mathematical problem classified as NP-hard, which is subject to optimization. The formula employed represents the objective function to be minimized as follows:
\begin{equation}
\text{min}_xx^TQx=\text{min}_x\sum_{i<j}Q_{ij}x_ix_j+\sum_iQ_{ii}x_i
\end{equation}
where the variable $x$ represents a binary decision vector of dimension $N$, and $Q$ designates a symmetric square matrix composed of $N\times N$ real-valued constants.

Similar to FGP-rOEE, QUBO partitions the quantum circuit into slices, each encompassing a series of gates. Each time slice can be depicted as a graph, with nodes representing distinct qubits and edges denoting interactions performed by two-qubit gates. The primary aim of the objective function is to determine an allocation for each time-slice graph, ensuring that all qubits engaged in a common gate are assigned to the same computational core without surpassing the core's capacity. Additionally, it seeks to reduce inter-core communications during these assignments. The objective function is then generalized for all time slices and modified to minimize potential inter-core communication between every pair of assignments. For more information on the mathematical process, refer to  \cite{bandic_mapping_2023}. The objective function counts with a weighting factor denoted as $\lambda$, which can be employed to adjust the different components of the objective function.

\subsection{Evaluation and Limitations}
In this section, we compare the performance of the state-of-the-art mapping algorithms for multi-core quantum computing architectures \cite{baker_time-sliced_2020, bandic_mapping_2023} described in previous sections to the non-local communication bounds proposed in Theorem \ref{the:naive_upper} and \ref{the:optimal_bounds}, as well as to the naive approach proposed in Algorithm \ref{alg:naive_mapping}. 

We use Qiskit's \cite{Qiskit} Random Circuit library to match the available implementation of \cite{bandic_mapping_2023}. The performance of the FGP-rOEE algorithm \cite{baker_time-sliced_2020} is assessed using three different random circuits of 120 qubits and different sizes: Random S (100 timeslices), Random M (200 timeslices), and Random L (300 timeslices). While the performance of the QUBO Mapping algorithm \cite{bandic_mapping_2023} is assessed with smaller random circuits of 48 qubits and different random sizes, Random S (25 timeslices), Random M (50 timeslices), and Random L (75 timeslices). The size difference of the used circuits reflects the execution time each algorithm needs to compute the valid mapping.

Figure \ref{fig:fgp_qubo_bound_experiments} shows the non-local communications when mapping the selected circuit into a modular architecture with a fixed number of qubits (120) distributed over an increasing number of cores. It can be seen how, though both the FGP-rOEE and QUBO mapping algorithms consider future interaction among qubits, taking into account more information than the Naive approach, their performance for Quantum Random Circuits is far from optimal, achieving in some cases a higher number of non-local communications than the Naive approach, and surpassing the Naive upper bound derived in Theorem \ref{the:naive_upper}.

\begin{figure}
\centering
\begin{subfigure}[t]{0.32\textwidth}
    \includegraphics[width=\textwidth]{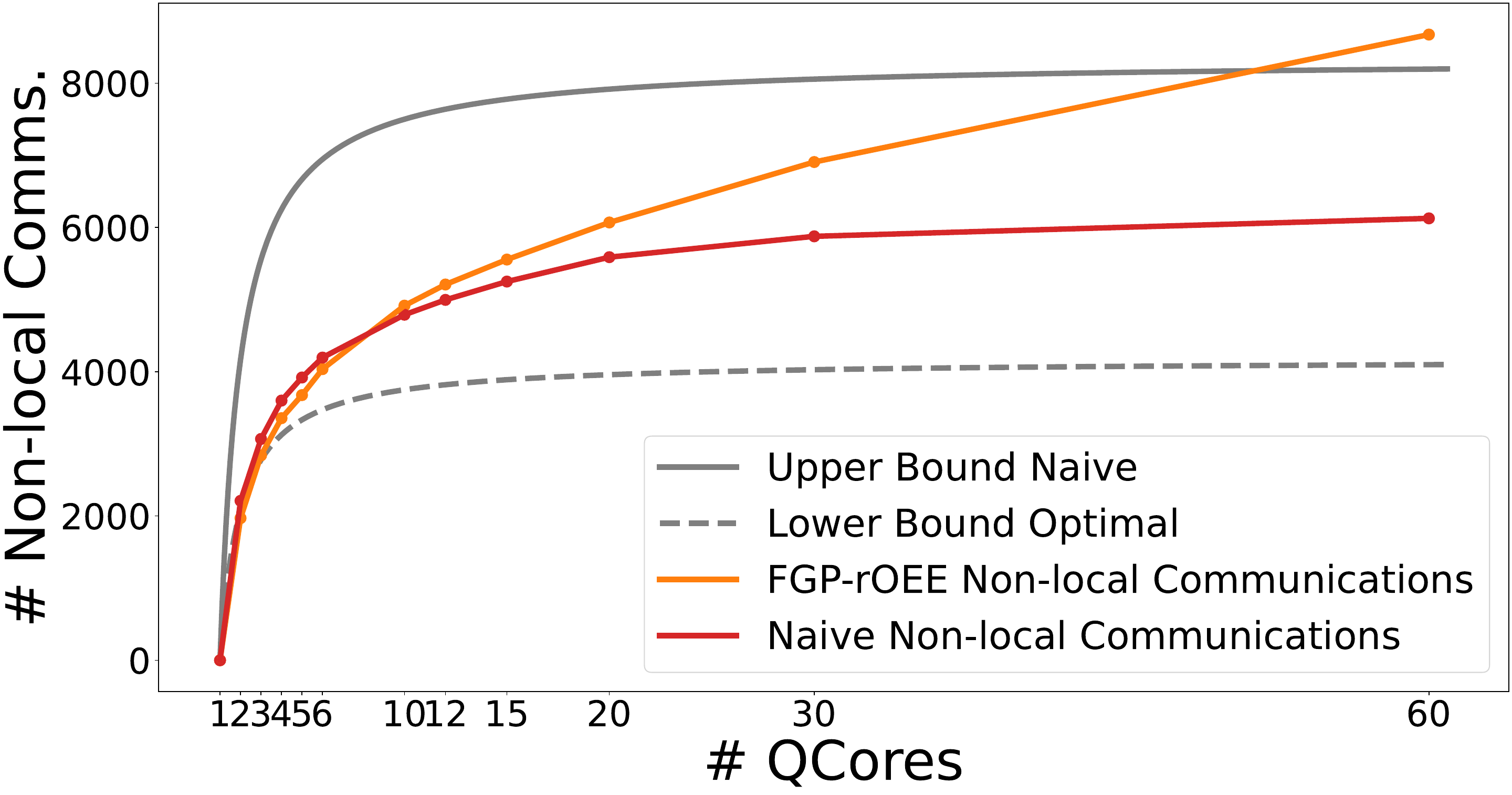}
    \caption{Random S FGP-rOEE Mapping.}
    \label{fig:random_s_fgp-roee}
\end{subfigure}
\hfill
\begin{subfigure}[t]{0.32\textwidth}
    \includegraphics[width=\textwidth]{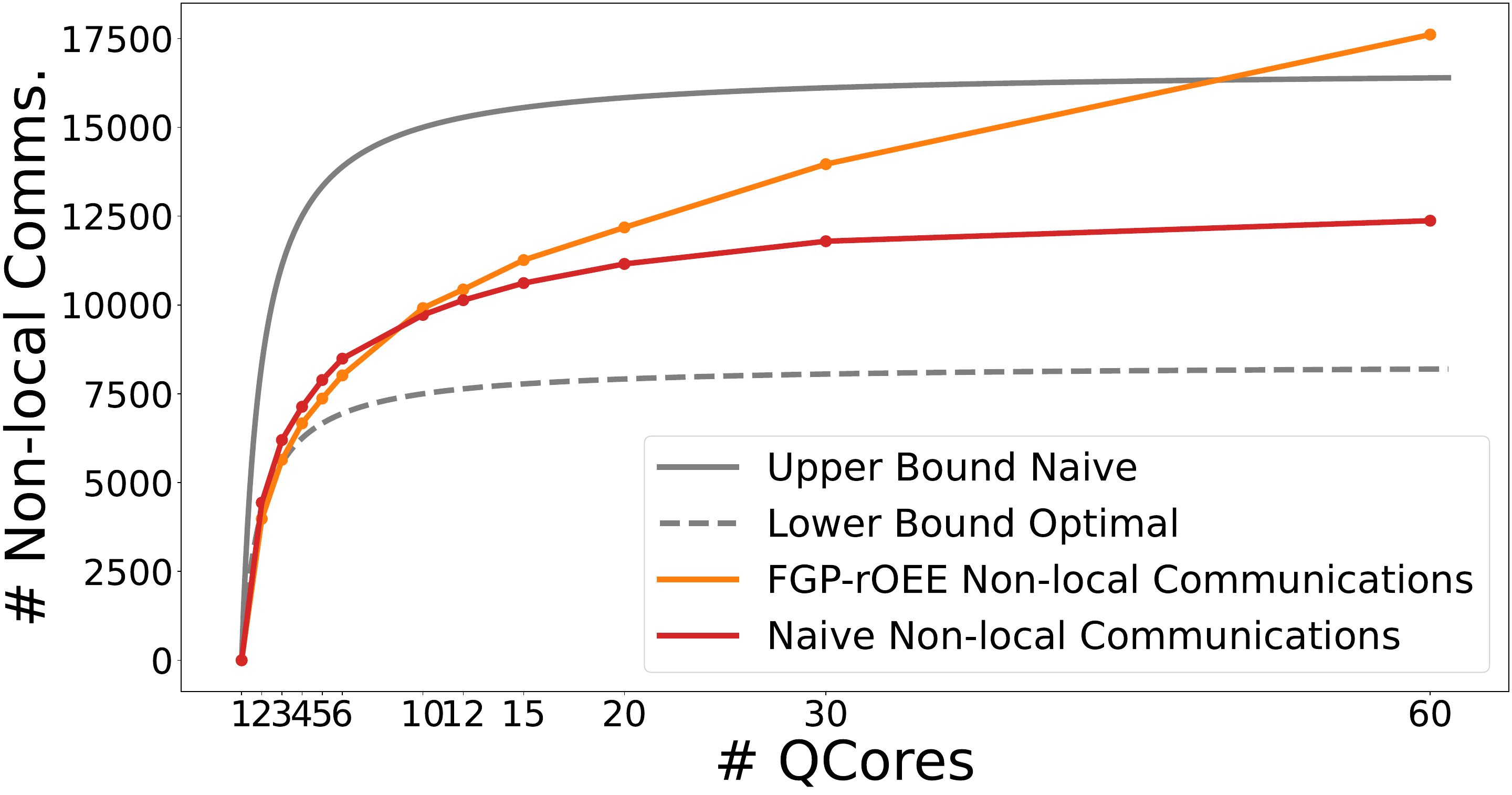}
    \caption{Random M FGP-rOEE Mapping.}
    \label{fig:random_m_fgp-roee}
\end{subfigure}
\hfill
\begin{subfigure}[t]{0.32\textwidth}
    \includegraphics[width=\textwidth]{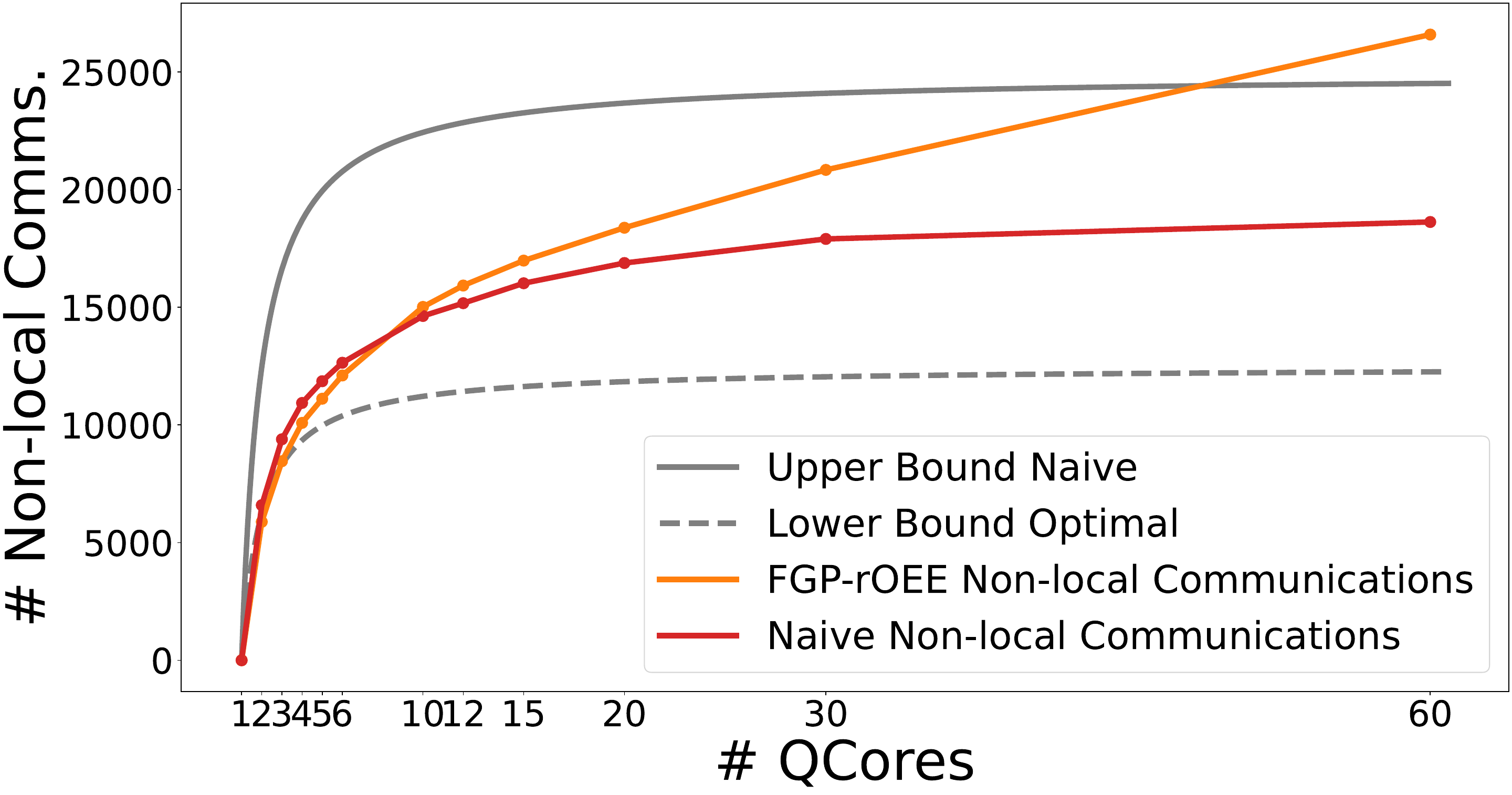}
    \caption{Random L FGP-rOEE Mapping.}
    \label{fig:random_l_fgp-roee}
\end{subfigure}
\hfill
\begin{subfigure}[t]{0.32\textwidth}
    \includegraphics[width=\textwidth]{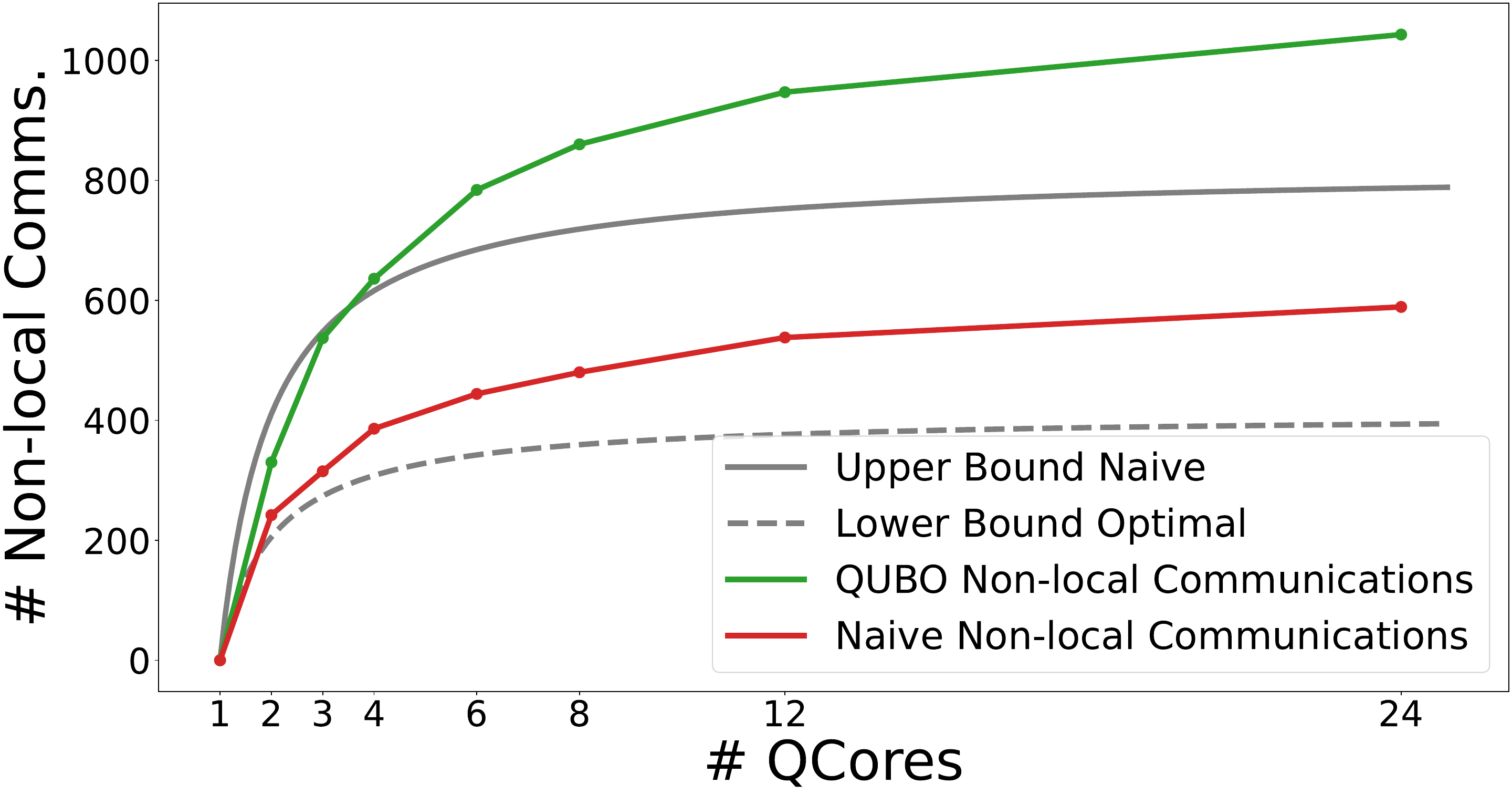}
    \caption{Random S QUBO Mapping.}
    \label{fig:random_s_qubo}
\end{subfigure}
\hfill
\begin{subfigure}[t]{0.32\textwidth}
    \includegraphics[width=\textwidth]{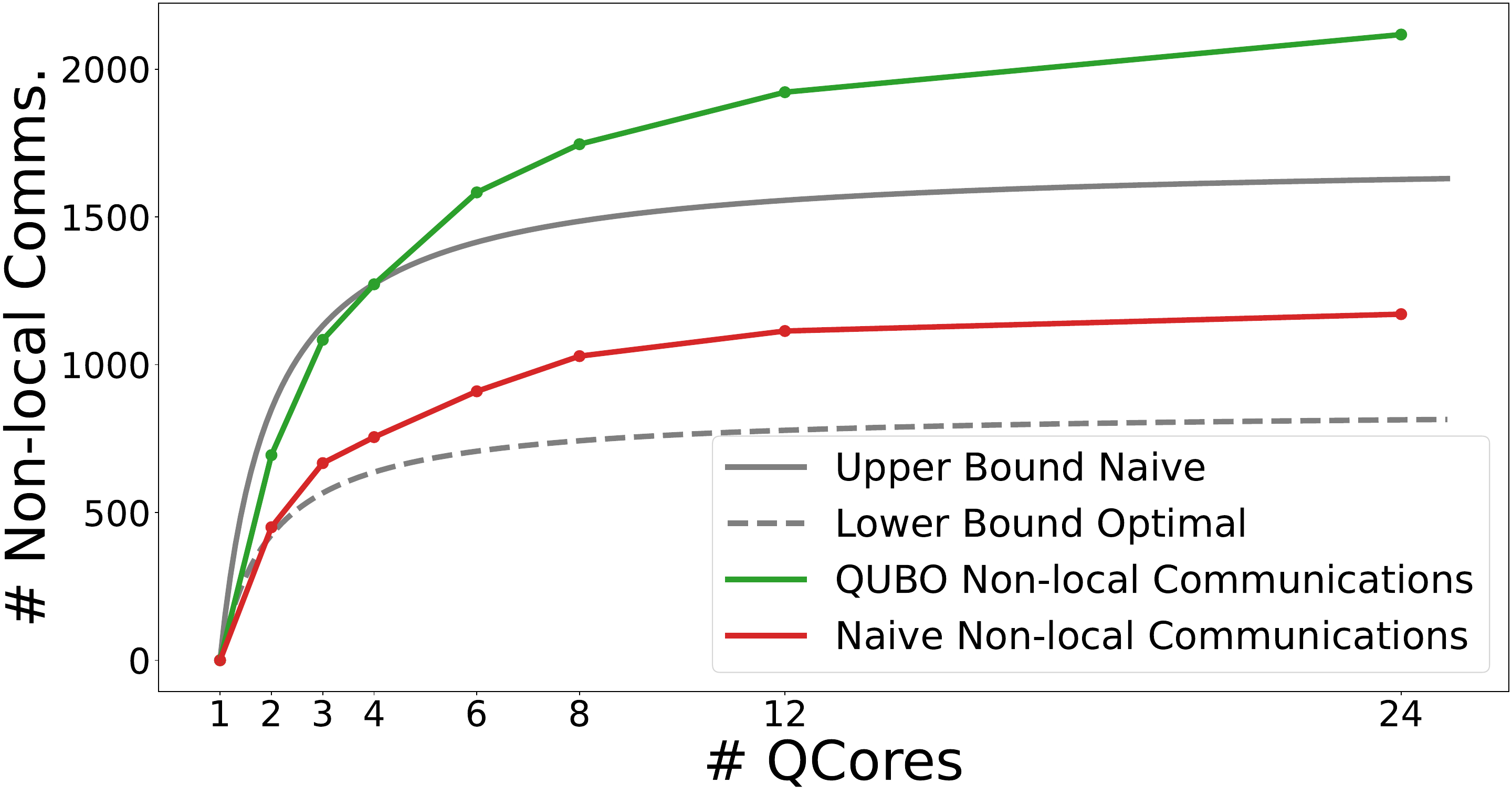}
    \caption{Random M QUBO Mapping.}
    \label{fig:random_m_qubo}
\end{subfigure}
\hfill
\begin{subfigure}[t]{0.32\textwidth}
    \includegraphics[width=\textwidth]{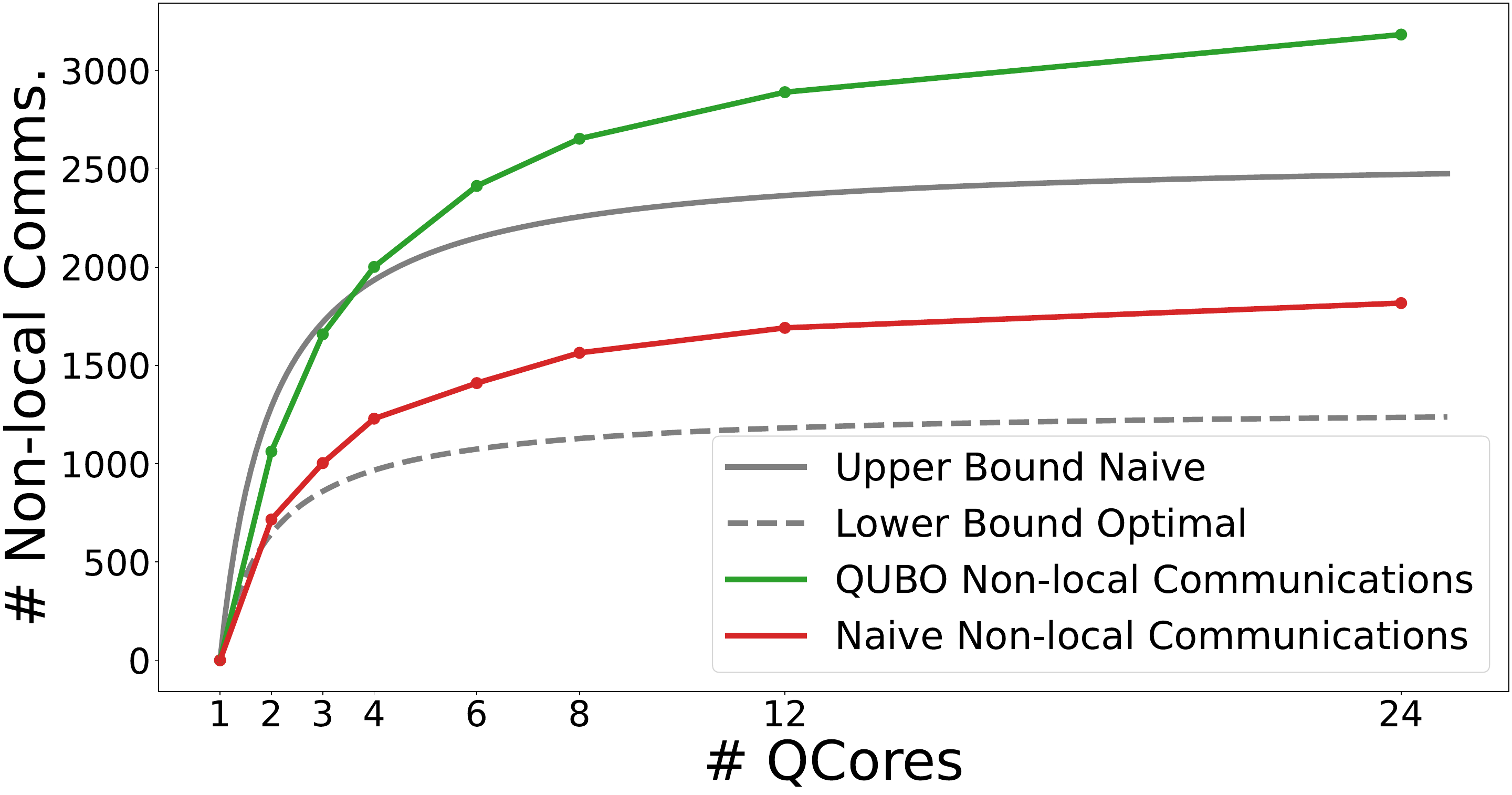}
    \caption{Random L QUBO Mapping.}
    \label{fig:random_l_qubo}
\end{subfigure}
\caption{Non-local communications when using the FGP-rOEE and QUBO Algorithms for different Quantum Random Circuits.}
\label{fig:fgp_qubo_bound_experiments}
\end{figure}

Our intuition on the observed sub-optimal performance of the tested algorithms is that it can be attributed to their fundamental approach, which treats mapping as a graph partitioning problem. This approach, while suitable for certain scenarios, may not be ideal for the specific task of distributing quantum circuits into modular architectures. Quantum circuit mapping, as a problem, indeed shares similarities with graph partitioning, as it involves the allocation of computational resources. However, it introduces unique constraints, such as a fixed number of cores and a predefined number of qubits per core, and existing algorithms for graph partitioning often do not consider the specific constraints imposed by quantum computing architectures. 

Furthermore, the primary objective when mapping a quantum circuit is not merely achieving a balanced allocation of qubits to cores but, more critically, minimizing the number of non-local communications needed in between time slices.

It is worth mentioning that both approaches can potentially improve the lower-bound on the number of non-local communications proposed in Theorem \ref{the:optimal_bounds}, as they consider future qubit interaction and, as stated in Section \ref{sec:comms_characterization}, mapping algorithms that take into account qubit interactions to optimize future movements have the potential to achieve a lower number of non-local communications than the one proposed.

In summary, while graph partitioning techniques offer a foundational framework for tackling the challenge of distributing quantum circuits into modular architectures, their suitability is limited by the unique constraints of mapping. The scarcity of specialized tools and the failure to prioritize the reduction of non-local communications contribute to the underperformance of these approaches. Addressing these limitations is essential to unlocking the full potential of quantum computing systems.

\section{Hungarian Qubit Assignment}
\label{sec:HQA}
In this section, the Hungarian Qubit Assignment (HQA) \cite{Escofet_2023} is described. Similar to \cite{baker_time-sliced_2020} and \cite{bandic_mapping_2023}, the HQA algorithm describes how to assign qubits to cores in between timeslices and is generalized to map the whole algorithm. However, unlike the previously discussed multi-core mapping algorithms, HQA distributes two-qubit operations into cores, a much easier task than distributing qubits into cores by graph partitioning. As each two-qubit operation within a timeslice involves two distinct qubits, the algorithm's output will be a valid assignment for that particular timeslice.


\subsection{General Overview}
The proposed algorithm is summarized in Algorithm \ref{alg:hqa_mapping}, and a simple example for a particular timeslice is depicted in Figure \ref{fig:hqa_overview}.

\begin{algorithm}
\caption{Hungarian Qubit Assignment}
\label{alg:hqa_mapping}
\KwData{Circuit Timeslice's  $Ts$}
\KwResult{Valid Qubit to Core assignments $A$}
$A_0 \gets $ Initial Assignment\;
$G_{unf} \gets []$\; 
\For{$T \in Ts$}{ 
    $A_T \gets $ Current Assignment\;
    \For {$(q_A, q_B) \in T$} {
        \If{$A_T[q_A] \neq A_T[q_B]$}{
            $G_{unf}.\texttt{insert}((q_A, q_B))$\; 
        }
    }

    \While {$G_{unf}.\texttt{not\_empty}()$} {
        $C_{mat} \gets $ Empty Cost Matrix\;
        \For {$(q_A, q_B) \in G_{unf}$} {
            \For{$n \in$ Cores} {
                \uIf{$n$ is full}{
                    $C_{mat}[(q_A, q_B)][n] \gets \infty$\; 
                }
                \uElseIf{$A_T[q_A] == n$ or $A_T[q_B] == n$}{
                    $C_{mat}[(q_A, q_B)][n] \gets 1$\; 
                }
                \Else{
                    $C_{mat}[(q_A, q_B)][n] \gets 2$\; 
                }
            }
        }
        $\texttt{assign} \gets \texttt{Hungarian}(C_{mat})$\;
        $A_T \gets A_T.\texttt{update(assign)}$\;
        $G_{unf} \gets G_{unf}.\texttt{remove(assign)}$\;
    }
    
    $A_{T+1} \gets A_T$\;
}
\end{algorithm}

\begin{figure}
\centering
\begin{subfigure}[t]{0.22\textwidth}
    \raisebox{0.43cm}{\includegraphics[width=\textwidth]{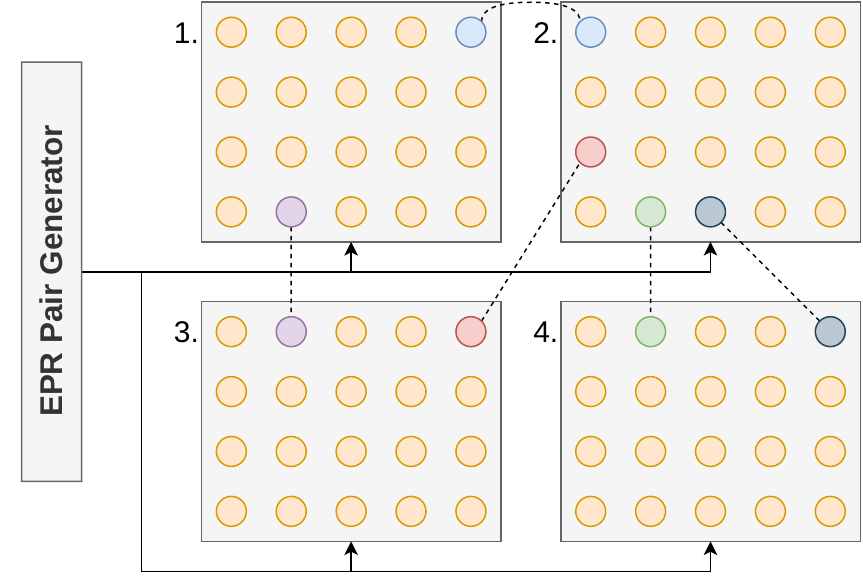}}
    \caption{Unfeasible assignment of qubits to cores.}
    \label{fig:hqa_1}
\end{subfigure}
\hfill
\begin{subfigure}[t]{0.22\textwidth}
    \includegraphics[width=\textwidth]{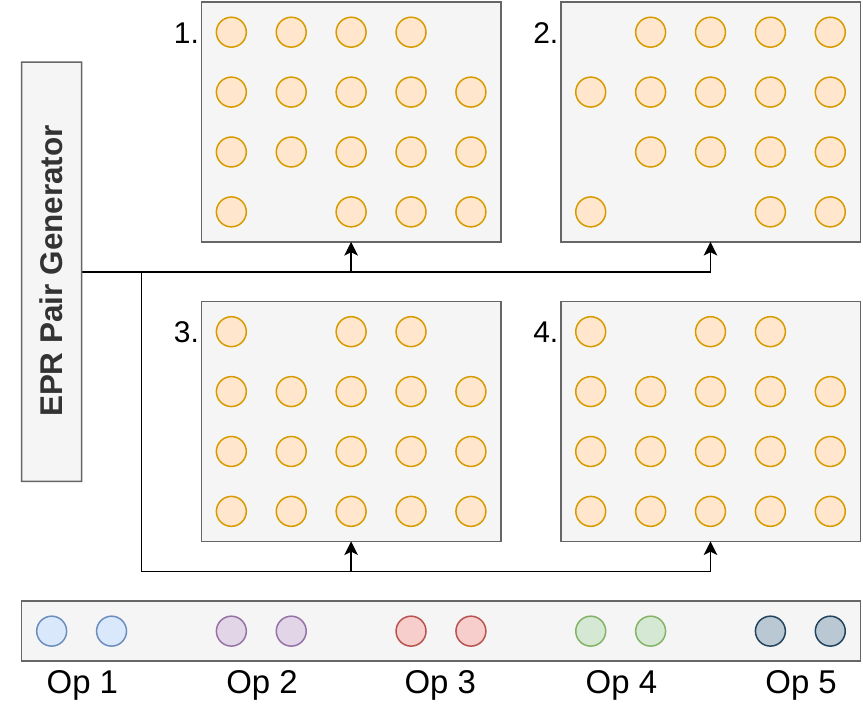}
    \caption{Remove the qubits involved in unfeasible operations.}
    \label{fig:hqa_2}
\end{subfigure}
\hfill
\begin{subfigure}[t]{0.22\textwidth}
    \includegraphics[width=\textwidth]{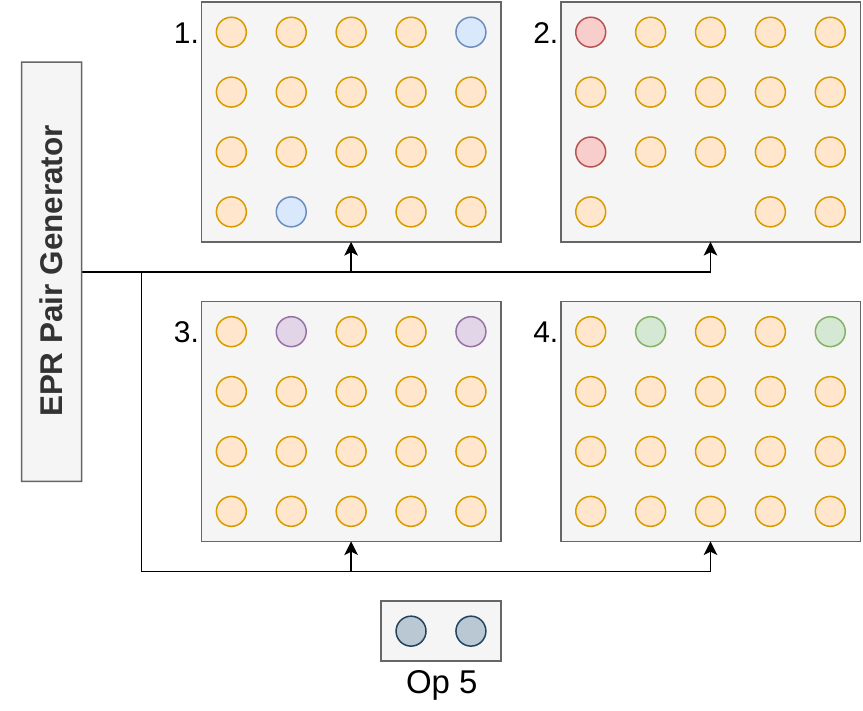}
    \caption{Each core gets assigned one unfeasible two-qubit gate.}
    \label{fig:hqa_3}
\end{subfigure}
\hfill
\begin{subfigure}[t]{0.22\textwidth}
    \raisebox{0.43cm}{\includegraphics[width=\textwidth]{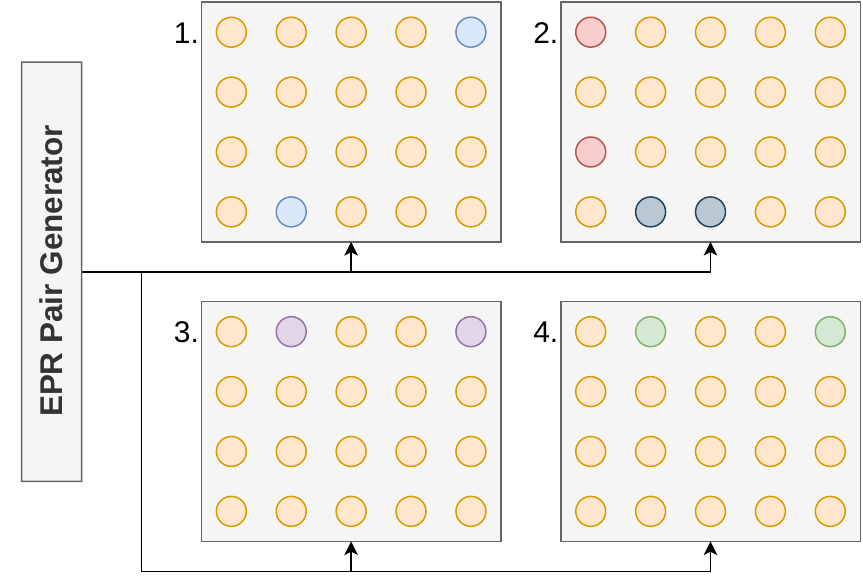}}
    \caption{Repeat until all two-qubit gates are assigned to a core.}
    \label{fig:hqa_4}
\end{subfigure}
\caption{Hungarian Qubit Assignment Overview. }
\label{fig:hqa_overview}
\end{figure}

The algorithm starts with a valid assignment of qubits to cores for timeslice $t$. The following timeslice ($t+1$) has a set of unfeasible two-qubit gates involving qubits that are currently located in different cores. This is depicted in Figure \ref{fig:hqa_1}, where we can see five unfeasible two-qubit operations (color-coded) for an example architecture of four cores.

The qubits involved in the unfeasible two-qubit operations are then removed from the assignment and placed in an auxiliary vector of unassigned qubits, as depicted in Figure \ref{fig:hqa_2}. Now, the task is to assign each unfeasible operation to a core with enough space to take in the qubits. For this assignment, we will construct a cost matrix using the cost function in Equation (\ref{cost funcion}). This cost function assigns a cost value $\mathcal{C}_t$ to each pair of unfeasible two-qubit operation $op_i$ and core $c_j$ based on how many non-local communications are needed to place both qubits involved in the operation $q_A, q_B$ into the destination core $c_j$ (i.e. one if a qubit involved in the unfeasible operation $op_i$ is already placed in core $c_j$, two otherwise).

\begin{equation}
    \mathcal{C}_t (op_i, c_j) = 
\begin{cases}
    \infty & \text{if } c_j \text{ is full}\\
    1 & \text{if } q_A \in c_j \text{ or } q_B \in c_j\\
    2 & \text{otherwise}
\end{cases}
\label{cost funcion}
\end{equation}

We use the Hungarian algorithm \cite{kuhn_hungarian_1955}, a highly efficient linear assignment algorithm with polynomial time complexity ($\mathcal{O} (n^3)$), to assign a single operation to each core using the cost matrix previously constructed. Its versatility, simple implementation, and robustness make the Hungarian algorithm a favored choice in various fields, especially when a quadratic algorithm is impractical.

The objective of the assignment problem is to find the best way to assign a set of tasks (cores) to a group of resources (unfeasible two-qubit operations) while minimizing the total cost. By doing this, at each iteration, only one operation will be assigned to each core, ensuring the core's capacity is not exceeded.

When an operation is assigned to a core, both qubits involved are placed in the free spaces of the core, decreasing by two the number of free spaces in the core every time an operation is assigned to it. By placing both qubits of each unfeasible operation into the same core ensures all two-qubit gates will be feasible in the following timeslice.

The Hungarian algorithm only assigns one operation per core. Therefore, some operations will remain unassigned in case of having more unfeasible operations than cores, as depicted in Figure \ref{fig:hqa_3}. A new cost matrix will be computed for those unassigned operations, considering the new free spaces of each core and setting a weight of infinity for those cores already full, ensuring that no core exceeds its capacity and that the resulting assignment will be valid. This process is repeated until all unfeasible gates have been assigned to a core, resulting in a valid assignment for the timeslice, as shown in Figure \ref{fig:hqa_4}.

It is important to note that when using the same number of virtual qubits (quantum states in the circuit) as physical qubits (qubits in the quantum computing architecture), each core must contain an even number of free spaces for this approach to work. Otherwise, when assigning operations into cores, there will be a pair of qubits left to assign and two cores with exactly one free space each, making it impossible to assign both qubits of the unfeasible operation to the same core. To ensure all cores contain an even number of free spaces and that all operations will be assigned, for each pair of cores with an odd number of space, an auxiliary two-qubit gate involving two non-interacting qubits from those cores is created. Forcing all cores to have an even number of free spaces.

Regarding the initial assignment for the algorithm, a structured assignment has the potential to perform much better than a random assignment. This possibility is explored in future sections.

\subsection{Considering future qubit interactions}
Future interactions of qubits can be added to the cost matrix to perform an assignment that further reduces the number of non-local communications. To this end, we quantify how much qubits interact in future timeslices using the same approach as in \cite{baker_time-sliced_2020}, described in Equation (\ref{look-ahead weights}), and introduce the attraction force of a qubit $q_i$ to a core $c_j$, which is computed as

\begin{equation}
    \texttt{attr}^q_t (q_i, c_j) = \sum_{i'=0}^q J_t(q_{i'}, c_j) \cdot w_t(q_i, q_{i'})
\label{attraction weights}
\end{equation}

where $J_t(q_{i'}, c_j) = 1$ if qubit $q_{i'}$ is in core $c_j$ at timeslice $t$, and $w_t(q_i, q_{i'})$ is computed using Equation (\ref{look-ahead weights}).

The new cost matrix is then computed using the costs given in Equation (\ref{cost function with attraction}), where the number of non-local communications needed and the attraction forces are combined. Note that, as each operation $op_i$ involves two qubits ($q_A$ and $q_B$), the operation's attraction force to a core is the average attraction force of the involved qubits.

\begin{equation}
    \texttt{attr}^{op}_t(op_i, c_j) = \frac{\texttt{attr}^q_t(q_A, c_j) + \texttt{attr}^q_t(q_B, c_j)}{2}
\end{equation}

\begin{equation}
    \mathcal{C}_t (op_i, c_j) = 
\begin{cases}
    \infty & \text{if } c_j \text{ is full}\\
    1 - \texttt{attr}^{op}_t(op_i, c_j) & \text{if } q_A \in c_j \text{ or } q_B \in c_j\\
    2 - \texttt{attr}^{op}_t(op_i, c_j) & \text{otherwise}
\end{cases}
\label{cost function with attraction}
\end{equation}

The proposed approach cost function is completely tunable, allowing for more complex variations of the problem. For example, if not all cores are connected to each other, the number of non-local communications to move a qubit to a given core may be more than one. The cost function can be adapted to this case and many others, leading to a robust and widely applicable inter-core mapping algorithm.

\subsection{Discussion}
\label{sec:HQA_discussion}
Similar to the other two mapping algorithms discussed, we compare the HQA algorithm to the bounds on the non-local communications proposed in Theorem \ref{the:naive_upper} and \ref{the:optimal_bounds}.

Figure \ref{fig:hqa_bound_experiments} shows that, unlike the other mapping algorithms, HQA always performs better than the Naive approach. Staying always below the upper bound, and obtaining a lower number of communications than the lower bound for a small number of cores.

\begin{figure}
\centering
\begin{subfigure}[t]{0.32\textwidth}
    \includegraphics[width=\textwidth]{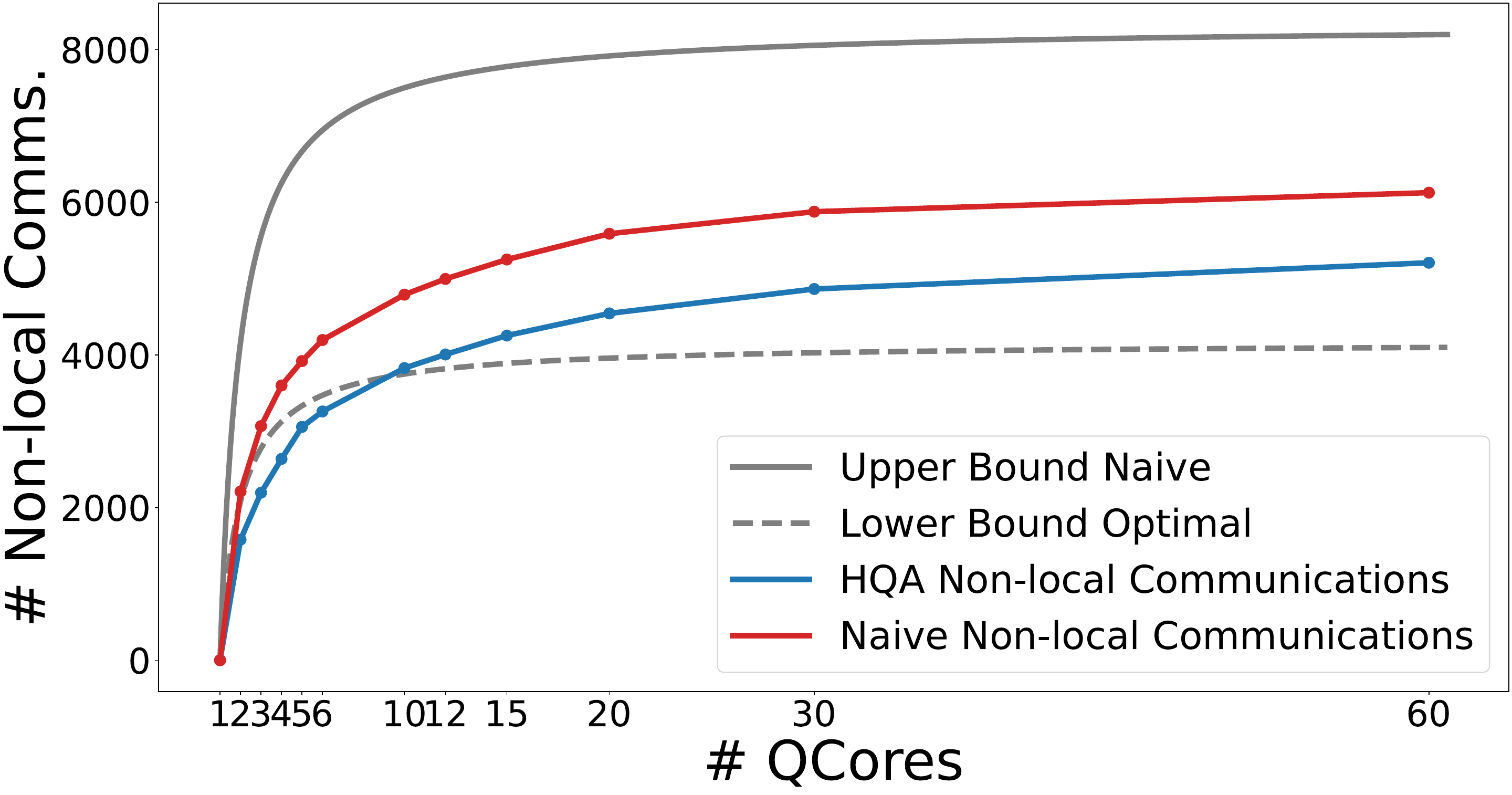}
    \caption{Random S HQA Mapping.}
    \label{fig:random_s_hqa}
\end{subfigure}
\hfill
\begin{subfigure}[t]{0.32\textwidth}
    \includegraphics[width=\textwidth]{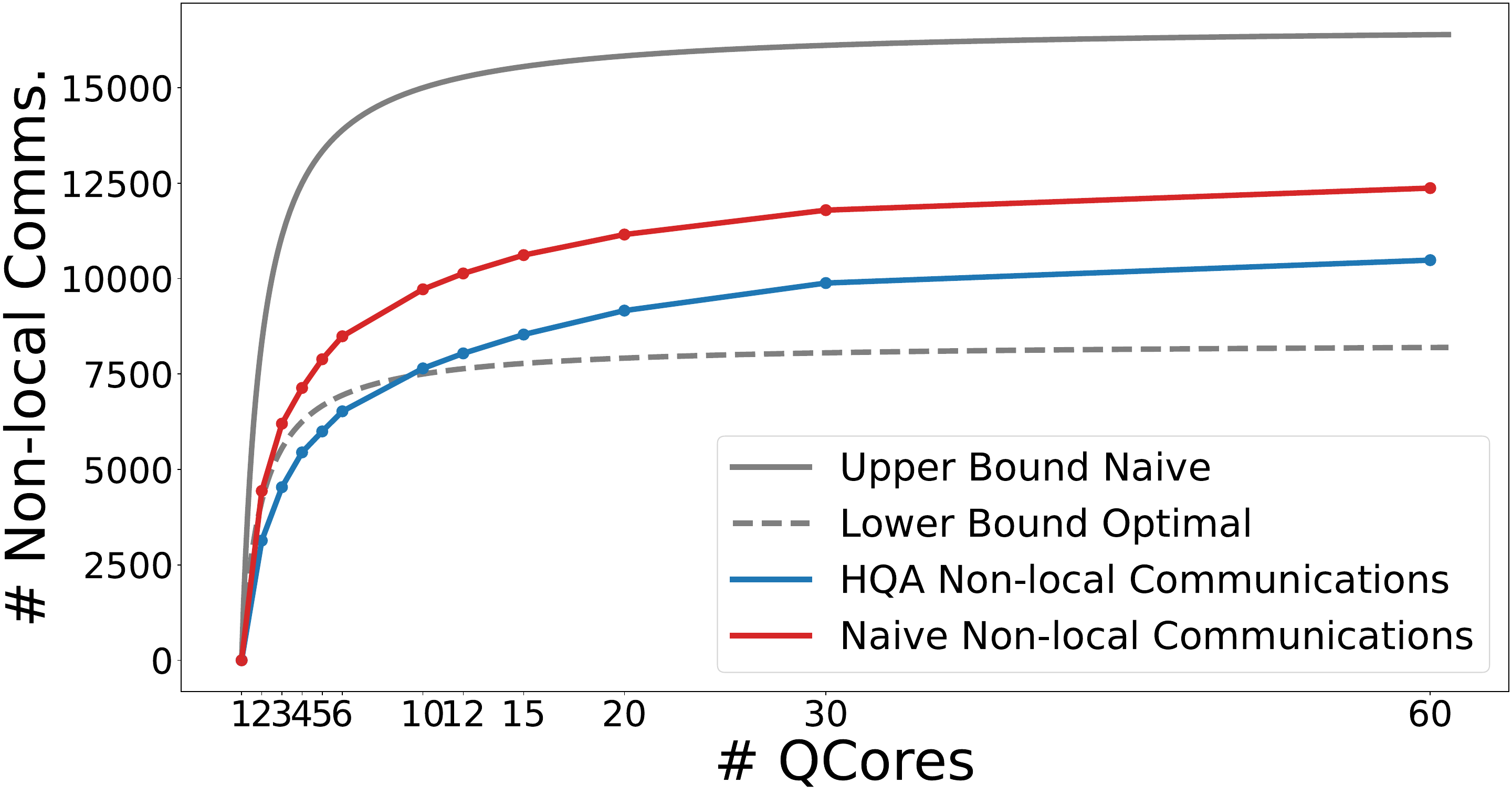}
    \caption{Random M HQA Mapping.}
    \label{fig:random_m_hqa}
\end{subfigure}
\hfill
\begin{subfigure}[t]{0.32\textwidth}
    \includegraphics[width=\textwidth]{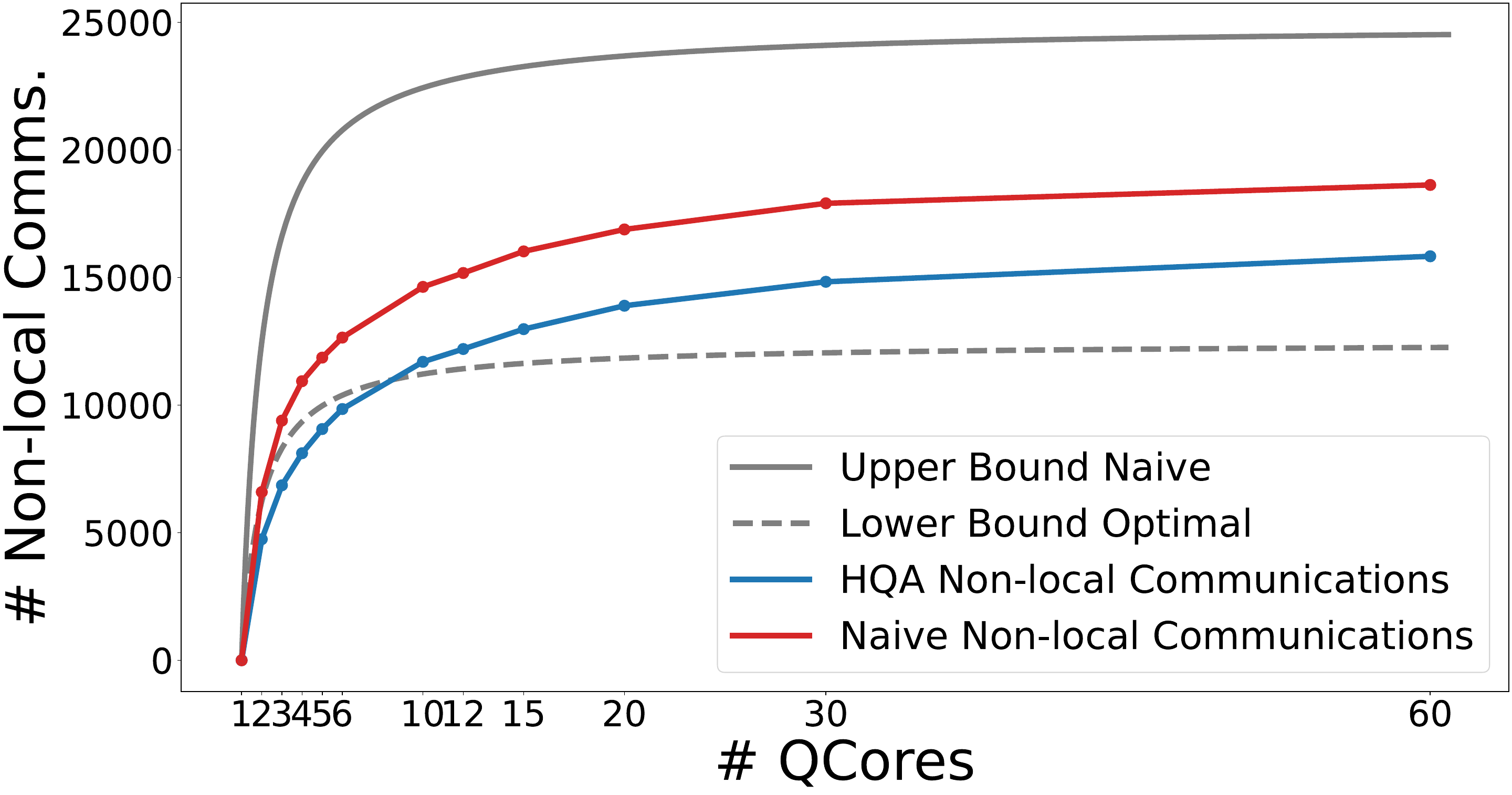}
    \caption{Random L HQA Mapping.}
    \label{fig:random_l_hqa}
\end{subfigure}
\caption{Non-local communications when using the HQA Algorithm for different Quantum Random Circuits. Communication bounds are shown in grey, and Naive Mapping is shown in red.}
\label{fig:hqa_bound_experiments}
\end{figure}

We attribute such improvement to the change of approach the proposed algorithm uses. Instead of partitioning a graph that describes the quantum circuit, the HQA assigns unfeasible two-qubit gates to cores, which directly focuses on minimizing the number of non-local communications.

Figure \ref{fig:attraction} depicts a comparison of the two proposed versions of the HQA algorithm, with and without considering future qubit interactions for the cost function. It can be seen that, by considering future qubit interaction and computing the attraction force as formulated in Equation (\ref{cost function with attraction}), we manage to decrease the number of non-local communications for a structured (Cuccaro) and an unstructured (Random) quantum circuits.

Moreover, we also study the impact of the initial distribution of qubits to cores for the algorithm. Figure \ref{fig:initial_placement_multi} shows the number of non-local communications when using a random initial partition or the graph partition obtained using the OEE \cite{PARK1995899}, as proposed in \cite{baker_time-sliced_2020}. We show that the mapping algorithm performs worse when starting with a random partition than when starting with a partition obtained by the OEE algorithm.

\begin{figure}
\centering
\begin{subfigure}[t]{0.495\textwidth}
    \includegraphics[width=\textwidth]{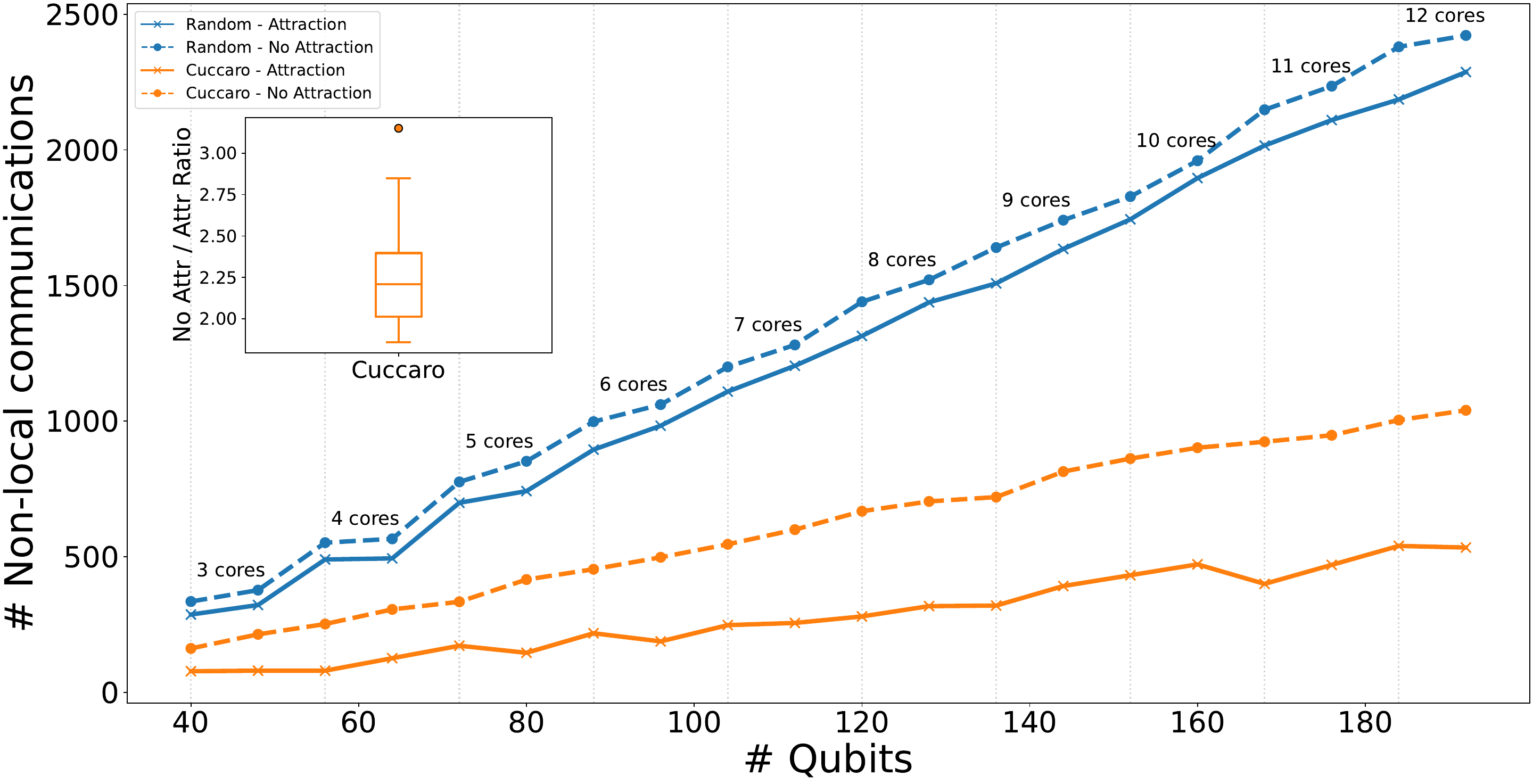}
    \caption{HQA communications with and without using the attraction force for future qubit interactions. \textcolor{black}{The use of attraction forces improves the number of non-local communications by $2.27\times$ for the Cuccaro Adder and by $1.09\times$ for the Random Circuit, on average.}}
    \label{fig:attraction}
\end{subfigure}
\hfill
\begin{subfigure}[t]{0.495\textwidth}
    \includegraphics[width=\textwidth]{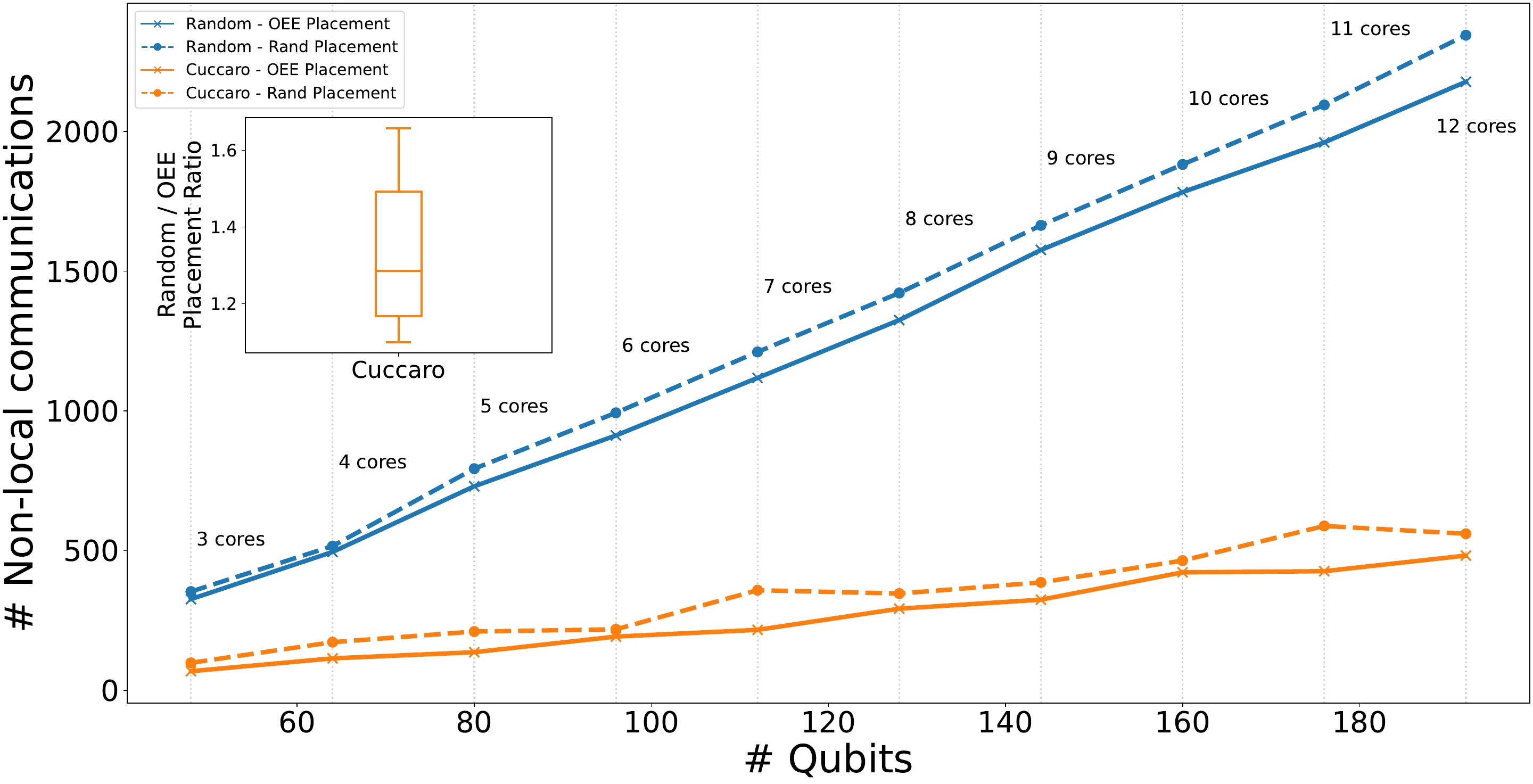}
    \caption{HQA communications with Random initial placement and OEE initial placement. \textcolor{black}{The use of the OEE initial placement improves the number of non-local communications by $1.33\times$ for the Cuccaro Adder and by $1.07\times$ for the Random Circuit, on average.}}
    \label{fig:initial_placement_multi}
\end{subfigure}
\caption{Different versions of the HQA, mapped into an architecture with 16 qubits per core and as many cores as needed, depending on the number of qubits used ($x$-axis). A structured circuit (Cuccaro Adder) and an unstructured one (Random) have been selected to highlight the importance of the attraction force. In each figure, the box plot depicts the Cuccaro Adder communications ratio with the two approaches.}
\label{fig:hqa_parameters}
\end{figure}

Therefore, from the results depicted in Figure \ref{fig:hqa_parameters}, we will use the HQA considering future qubit interactions and starting with a partition for the experiments carried out in future sections, where we deeply analyze the performances of the three reviewed mapping algorithms, assessing them in terms of optimally (number of non-local communications) and efficiency (execution time).

\section{Performance evaluation of HQA}
\label{sec:methodology}
To compare the different mapping algorithms explained before, we will focus on the scalability problem, increasing the number of cores and qubits, to analyze how the different approaches adapt to the increasing of resources. To do so, three different sets of experiments are proposed:

\begin{itemize}
    \item For the first scalability approach, \textbf{virtual scaling}, a fixed-size multi-core quantum computing architecture of 10 cores and 10 qubits per core (100 physical qubits in total) has been set, it consists of mapping a quantum circuit with increasing the number of virtual qubits from 50 virtual qubits, where half of the qubits of the architecture will be used as ancillary qubits, to 100 virtual qubits, where all qubits from the architecture will be used as data qubits and no ancillary qubits will be used. With this experiment, we aim to assess the importance of ancillary qubits in multi-core mapping.
    \item The \textbf{weak scaling} approach involves mapping a quantum circuit of 200 virtual qubits, into a quantum hardware with 200 physical qubits, varying the numbers of cores (from 2 to 10 cores) and qubits per core. The number of qubits per core will depend on the architecture’s number of cores, ensuring that all cores will have the same number of qubits. Therefore, no ancillary qubits will be used in any hardware configuration for this approach. This experiment demonstrates how the number of communications across cores and execution time varies with the number of architecture cores and how the selected mapping algorithms perform when increasing the number of architecture cores. Due to the high number of qubits used in this experiment, we only show the performance of the proposed Naive approach (Algorithm \ref{alg:naive_mapping}), the FGP-rOEE \cite{baker_time-sliced_2020}, and the proposed HQA (Algorithm \ref{alg:hqa_mapping}), excluding the QUBO Mapping algorithm \cite{bandic_mapping_2023} given its high runtime.
    \item Lastly, the last scaling approach, \textbf{strong scaling}, consists of increasing the number of cores (with 10 qubits each) in the architecture, from 2 cores (20 qubits) up to 20 cores (200 qubits), in each case the number of virtual qubits will be the same as the physical qubits. Again, due to the high execution time of the QUBO mapping algorithm \cite{bandic_mapping_2023}, the set of experiments using this algorithm will range only up to 10 cores (100 qubits). With this experiment, we show how the algorithms behave when increasing both the number of qubits and cores, covering all the different types of scaling.
\end{itemize}

\subsection{Benchmarks and performance metrics}
For all scaling approaches, the same set of benchmarks is selected. Containing high-structured quantum algorithms such as Quantum Fourier Transform \cite{coppersmith2002approximate, nielsen_chuang_2010}, Draper Adder \cite{draper2000addition}, and Cuccaro Adder \cite{cuccaro2004new}, as well as unstructured quantum algorithms such as Random Quantum Circuits (with a depth two times the number of virtual qubits used), and Quantum Volume \cite{cross_2019_validating}.

We have used Qiskit's \cite{Qiskit} implementation of the algorithms. Each algorithm is sliced into timeslices as proposed by \cite{bandic_mapping_2023}, and the same set of slices is used as input for each mapping algorithm.

For the HQA experiments, we use the cost function that considers future qubit interactions (described in Equation (\ref{cost function with attraction})), and the initial assignments of qubits into cores will be the one obtained by the OEE \cite{PARK1995899} applied to the total interaction graph. Both decisions are supported by the exploration conducted in Section \ref{sec:HQA_discussion}, and summarized in Figure \ref{fig:hqa_parameters}.

The main performance metric used in this work is the number of non-local communications (i.e. communications across cores). The Naive Algorithm proposed in Algorithm \ref{alg:naive_mapping} will be used as a baseline for the number of non-local communications.

Moreover, we also analyze the execution time of each mapping algorithm, showing the speedup of the HQA Algorithm compared to FGP-rOEE and QUBO Mapping algorithms.

\textcolor{black}{All experimental procedures were conducted on a computing system featuring an Intel(R) Xeon(R) CPU E5-2640 v4 @ 2.40GHz, equipped with 131.7 GB of RAM and 40 cores, operating on CentOS Linux 7. The simulation procedure was implemented utilizing Python 3.8, and the benchmarking tasks were facilitated through the framework provided by Qiskit \cite{Qiskit}. The QUBO implementation was obtained from its open repository \cite{bandic_mapping_2023}, and FGP-rOEE has been implemented according to its proposal in \cite{baker_time-sliced_2020}.}

\subsection{Results}
\label{sec:results}
This section analyses the three different mapping algorithms based on the different scaling experiments explained in the last section: Virtual, Weak, and Strong Scaling. For each approach, we show the number of non-local communications for all the selected benchmarks, as well as an improvement ratio of FGP-rOEE, QUBO or HQA over the Naive approach (median across the different data points).

Moreover, the execution time of all mapping algorithms for each benchmark is depicted in the lower row of Figures \ref{fig:virtual_scaling}, \ref{fig:weak_scaling}, and \ref{fig:strong_scaling}, along with the average execution time, over all the benchmarks. Lastly, a SpeedUp of the HQA over FGP-rOEE and QUBO is shown, highlighting the HQA algorithm's value.
\subsubsection{Virtual Scaling}
Figure \ref{fig:virtual_scaling} shows, in the first two rows, the non-local communications needed for each benchmark when increasing the number of virtual qubits of the circuit ($x$-axis). Results show how the number of non-local communications increases when increasing the number of virtual qubits in the circuit, independently of the quantum benchmark or the mapping algorithm. This is due to the use of ancillary qubits (physical qubits of the architecture that do not store any quantum state) for allocation purposes.

The HQA algorithm outperforms all other mapping algorithms in all benchmarks except for the Cuccaro Adder, where a similar number of non-local communications is obtained for HQA, FGP-rOEE and the Naive approach. For most benchmarks, FGP-rOEE obtains a similar number of non-local communications than the Naive approach, highlighting that the graph partitioning approach does not perform as well as the two-qubit operations assignment approach.

Regarding the execution time, we can see that HQA outperforms QUBO in all benchmarks, and FGP-rOEE in most of them, as, for Random Circuits, the compilation time of HQA is $\sim 0.67 \times$ faster ($\sim 1.5 \times$ slower) than FGP-rOEE. For all the other benchmarks, HQA managed to obtain the assignment with a speedup of up to $15 \times$. The high execution time for the QUBO Mapping algorithm was expected, as the graph-partition problem is posed as a quadratic optimization problem. 

\begin{figure}[t]
\includegraphics[width=\textwidth]{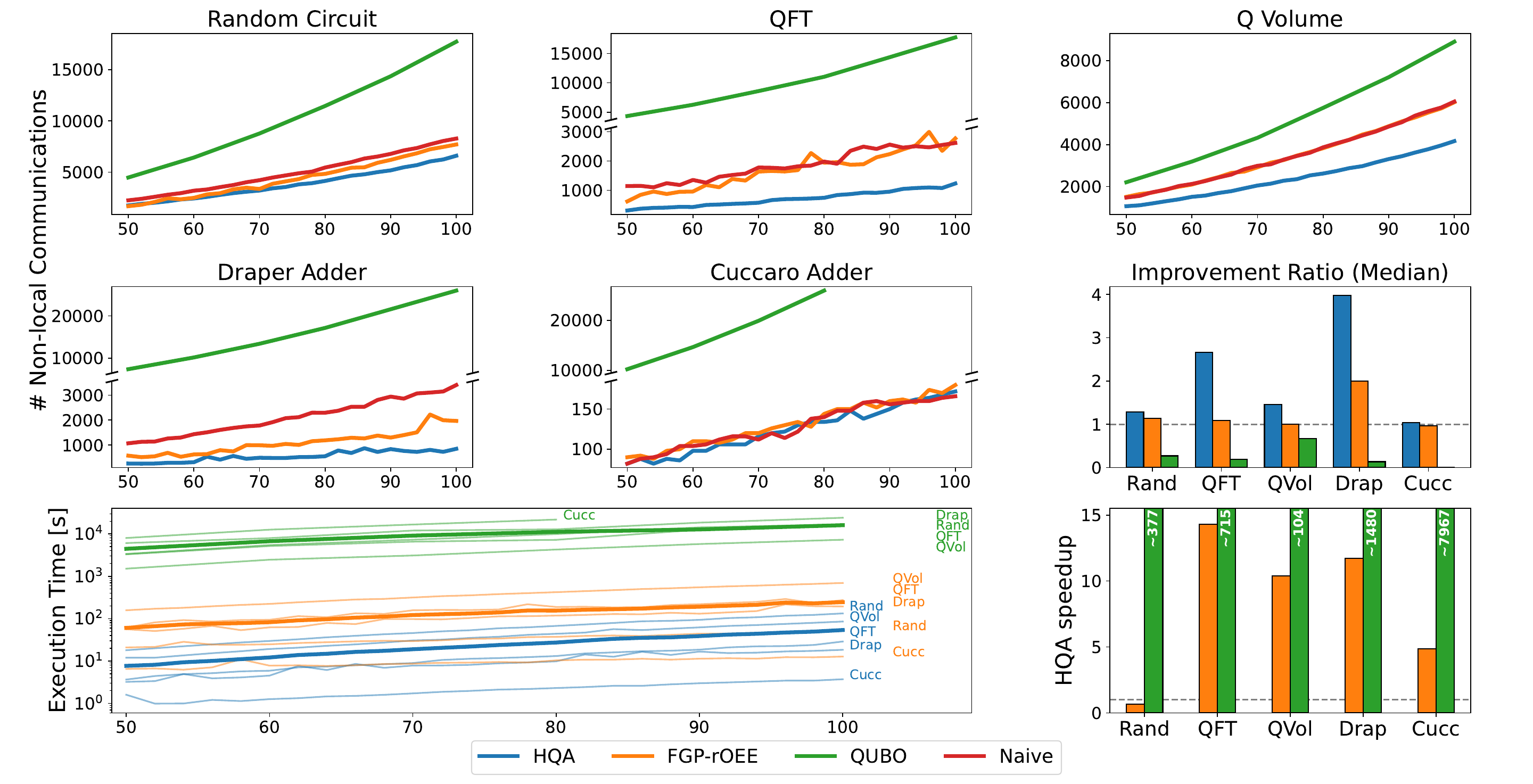}
\caption{Virtual Scaling. Mapping quantum algorithms into a 10-core architecture with ten qubits per core (100 total qubits), increasing the number of qubits used in the circuit from 50 qubits to 100 qubits. \textcolor{black}{The first two rows show the number of non-local communications for each one of the approaches, and the last row show the execution time for FGP-rOEE, QUBO, and HQA. The bar plots show the improvement ratio for both the number of non-local communications and the execution time. }}
\label{fig:virtual_scaling}
\end{figure}

\subsubsection{Weak Scaling}
Figure \ref{fig:weak_scaling} shows, in the first two rows, the non-local communications needed for each benchmark when increasing the number of cores of the architecture ($x$-axis), and the execution time of each experiment in the bottom row. It can be seen how, for all benchmarks, for all mapping algorithms, the number of non-local communications increases as so it does the number of cores of the architecture. Due to the high amount of qubits used in this experiment (architecture with two hundred physical qubits), the QUBO algorithm has been discarded, as it took too much time to execute.

With this scaling approach, we aim to study how mapping algorithms perform under edge cases, such as having an architecture with a hundred cores, and two qubits per core. Under these cases, we can see that, independently of the benchmark, FGP-rOEE performs much worse than the HQA algorithm and the Naive baseline. This is due again to the graph-partitioning approach this algorithm uses, as it performs swaps of qubits among cores, until obtaining a valid assignment. When the number of cores is too high, the possible swaps of qubits increase, making the heuristics fail in choosing which pair of qubits to swap.

Thanks to the new approach of assigning unfeasible two-qubit gates instead of partitioning the graph, the HQA algorithm achieves a lower number of non-local communications than the FGP-rOEE in most benchmarks, improving the Naive baseline in all of them.

Regarding the execution time, we can see that the HQA found the solution in less or equal time than FGP-rOEE for all benchmarks, remarking a speedup of more than $8 \times$ for the Cuccaro Adder benchmark. 

\begin{figure}[t]
\includegraphics[width=\textwidth]{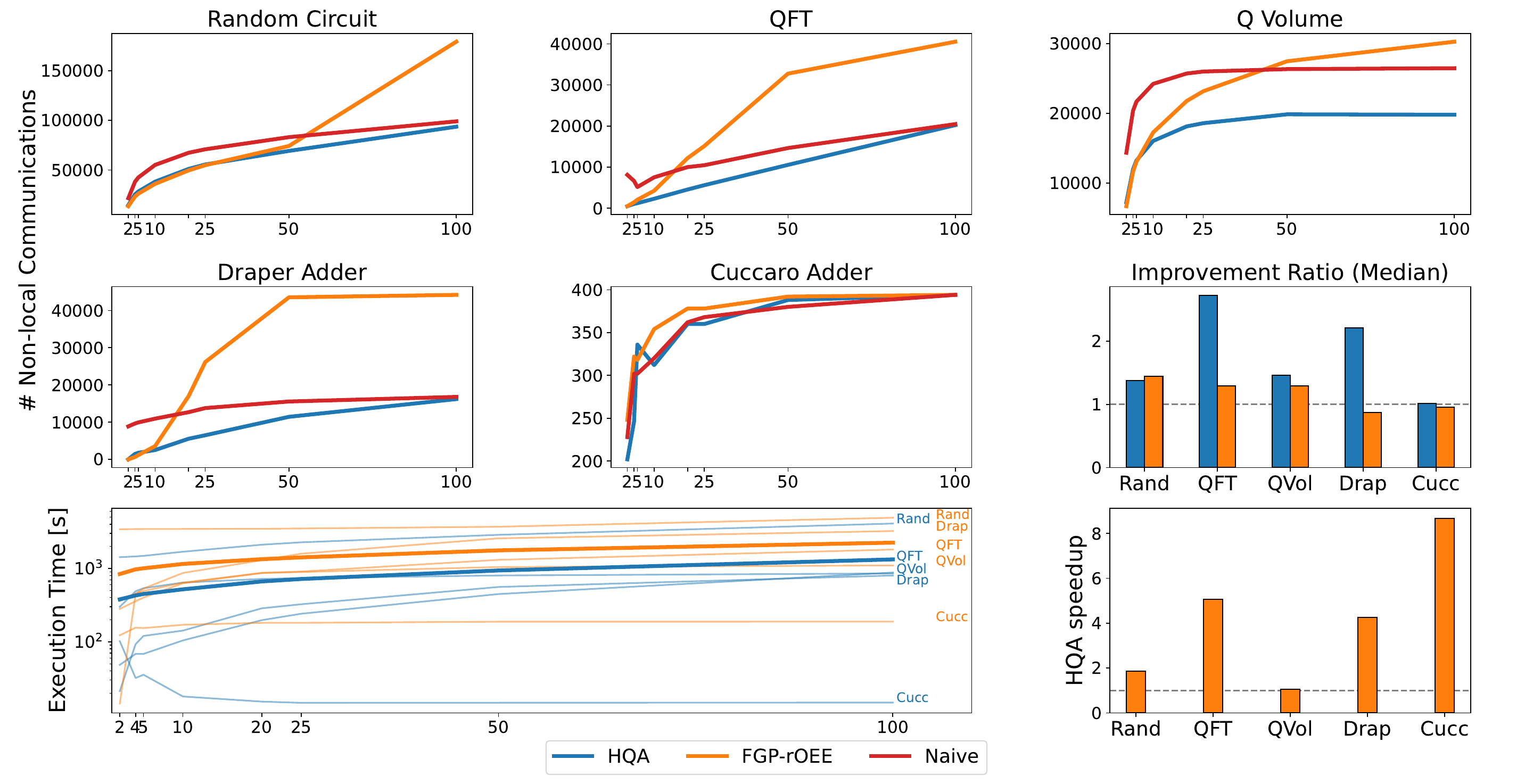}
\caption{Weak Scaling. Mapping quantum algorithms into a 200-qubit architecture with varying numbers of cores. \textcolor{black}{The first two rows show the number of non-local communications for each one of the approaches, and the last row show the execution time for FGP-rOEE and HQA. The bar plots show the improvement ratio for both the number of non-local communications and the execution time. }}
\label{fig:weak_scaling}
\end{figure}

\subsubsection{Strong Scaling}
Lastly, Figure \ref{fig:strong_scaling} shows, in the first two rows, the non-local communications needed for each benchmark when increasing the number of cores and qubits of the architecture ($x$-axis) and the execution time of each experiment in the bottom row. In this experiment, the size of each core is fixed to ten qubits per core, increasing the number of qubits in the circuit according to the number of qubits in the architecture. Due to the high execution cost of the QUBO algorithm, we have restricted the experiments from a 2-core architecture with 20 qubits up to an 8-core architecture with 80 qubits, while for the FGP-rOEE and the HQA mapping algorithms, we scale the architecture up to a 20-core architecture with 200 qubits.

As expected, it can be seen how, as we increase the amount of qubits in the circuit (and in the architecture), the number of non-local communications also increases. For all benchmarks, the HQA manages to obtain a lower number of non-local communications than the baseline, except for the Cuccaro Adder, where it achieves similar non-local communications.

The HQA also outperforms FGP-rOEE and QUBO in all benchmarks. The QUBO algorithm always performs worse than the baseline, and the FGP-rOEE algorithm performs similarly to the baseline for most benchmarks. On average, across all benchmarks and scaling approaches, the HQA improves the number of non-local communications obtained by FGP-rOEE by $1.556\times$.

Regarding the execution time, again, we see a high execution time for the QUBO algorithm, as it poses a graph-partitioning problem using quadratic optimization. On average, for a low number of cores, the FGP-rOEE mapping algorithm obtains a faster solution than the HQA. However, as the number of cores increases, the FGP-rOEE execution time increases much faster than the HQA's, \textcolor{black}{as it mostly depends on the number of cores and the number of qubits of the system,} ending with a lower execution time for the HQA mapping algorithm, highlighting its scalability. \textcolor{black}{For Random Circuits and Quantum Volume, which are unstructured circuits, the number of unfeasible two-qubit gates drive the HQA’s computation time, leading to a higher execution time. Nevertheless, even for unstructured circuits, FGP-rOEE takes more time than HQA once the number of cores or qubits is high enough.}

Without taking into account the HQA, FGP-rOEE is the best-performing multi-core mapping algorithm. The Hungarian Qubit Assignment algorithm managed to improve FGP-rOEE's number of non-local communications by $1.556\times$, reducing its execution time by $4.897\times$, on average, across all benchmarks and scalability approaches.

\begin{figure}[t]
\includegraphics[width=\textwidth]{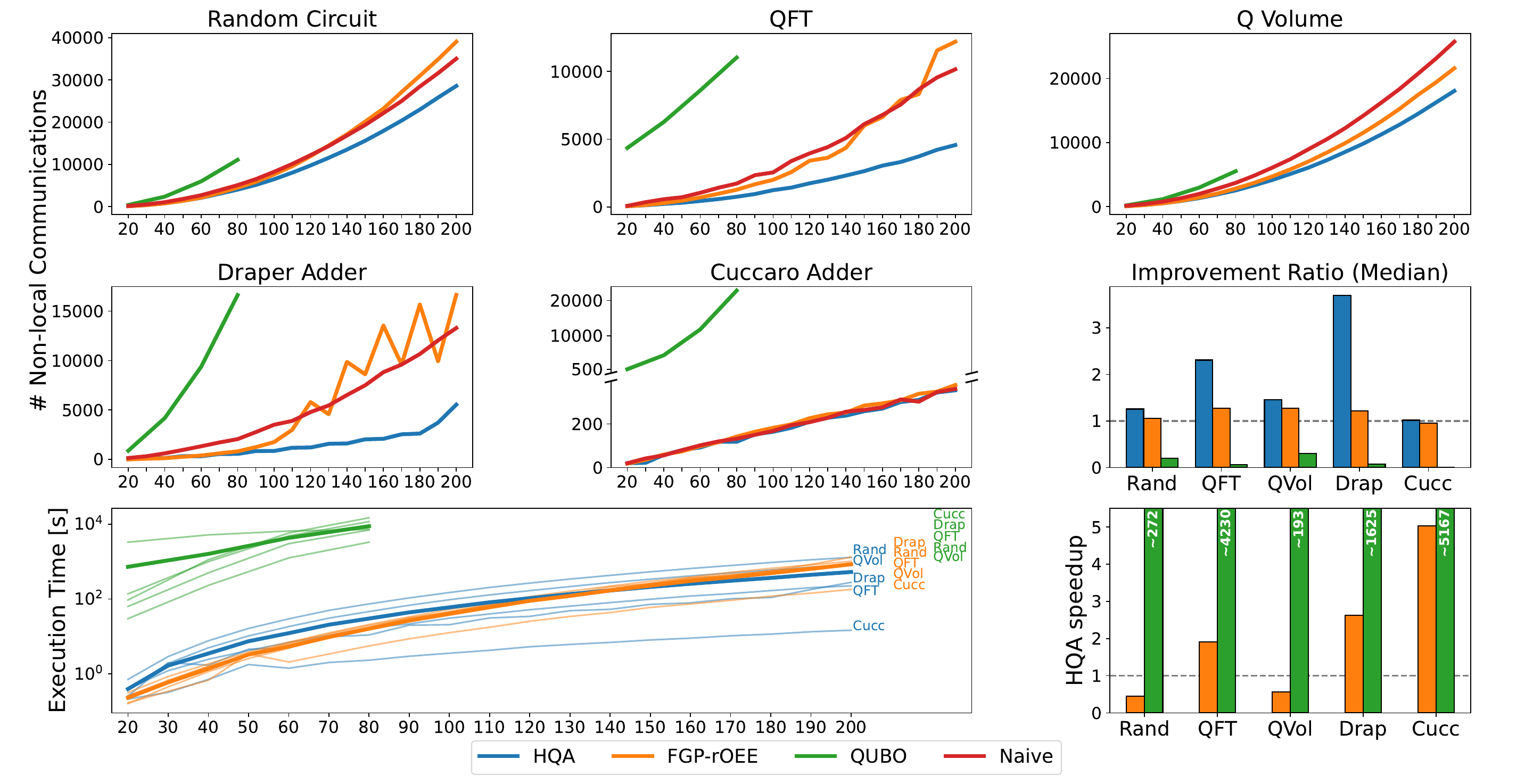}
\caption{Strong Scaling. Mapping quantum circuits into a $k$-core architecture, increasing the number of cores, with a fixed size of 10 qubits per core. \textcolor{black}{The first two rows show the number of non-local communications for each one of the approaches, and the last row show the execution time for FGP-rOEE, QUBO, and HQA. The bar plots show the improvement ratio for both the number of non-local communications and the execution time. }}
\label{fig:strong_scaling}
\end{figure}

\textcolor{black}{Figure \ref{fig:hqa_scaling} shows the behaviour of HQA when scaling up to 1024 qubits, also in a strong scaling way, with 64 qubits per core and an increasing number of cores.}

\begin{figure}
\centering
\begin{subfigure}[t]{0.48\textwidth}
    \includegraphics[width=\textwidth]{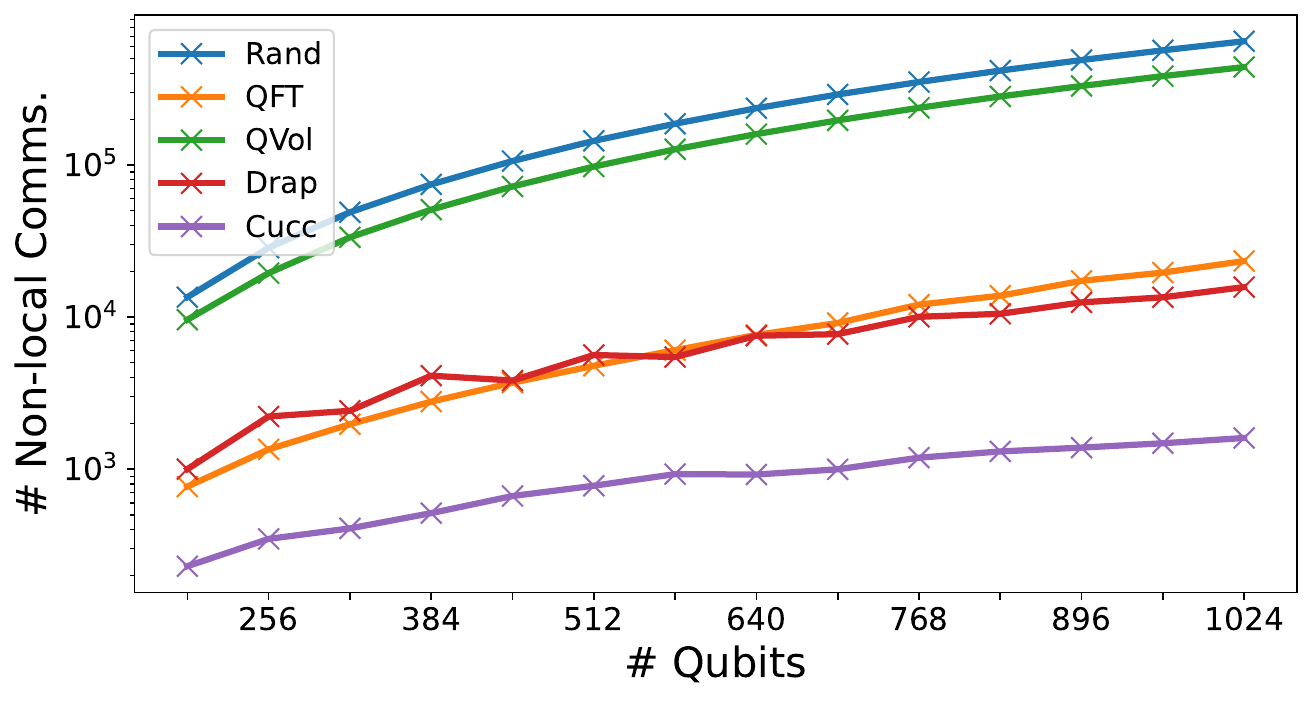}
    \caption{\textcolor{black}{HQA number of communications.}}
    \label{fig:hqa_scaling_comms}
\end{subfigure}
\hfill
\begin{subfigure}[t]{0.48\textwidth}
    \includegraphics[width=\textwidth]{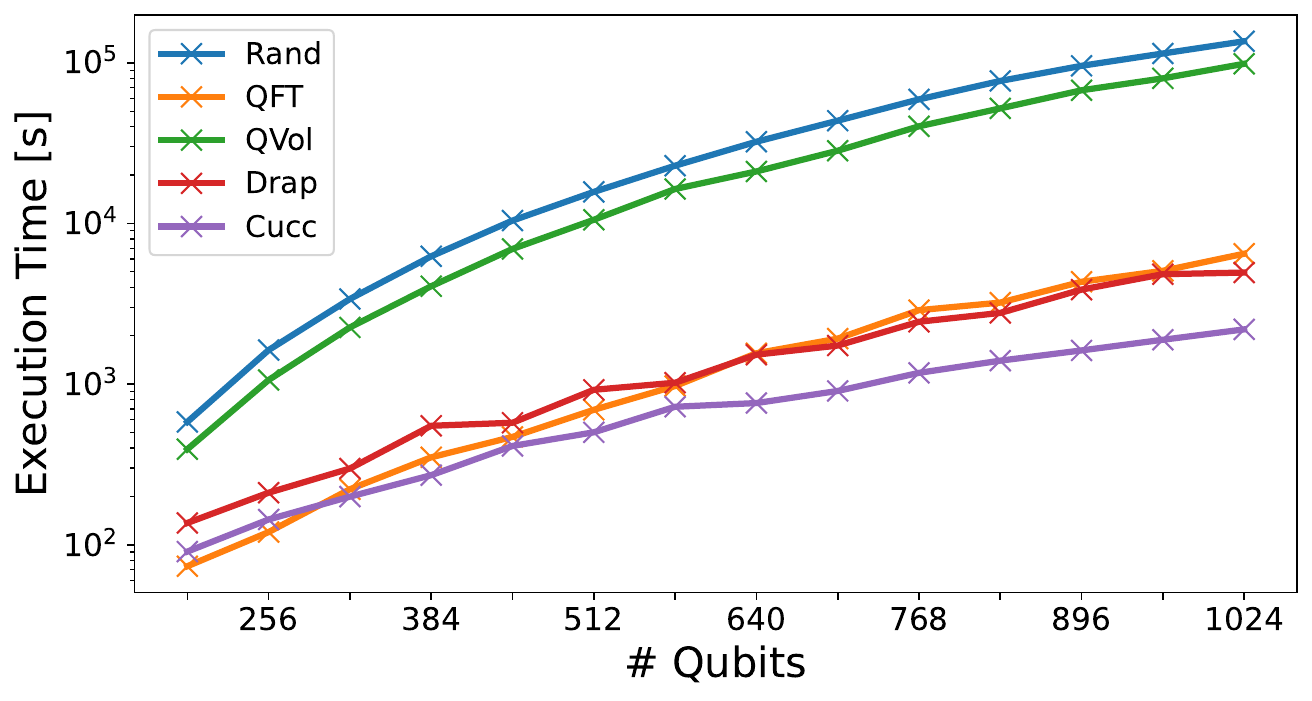}
    \caption{\textcolor{black}{HQA execution time.}}
    \label{fig:hqa_scaling_times}
\end{subfigure}
\caption{\textcolor{black}{HQA number of communications and execution time for a system with a fixed core size of 64 qubits.}}
\label{fig:hqa_scaling}
\end{figure}

\section{Conclusions and Future Work}
\label{sec:conclusions}
Multi-core quantum computing architectures offer a promising path to overcoming the limitations of monolithic processors and enabling the execution of complex quantum algorithms. In this paper, we have explored the intricate landscape of multi-core quantum computing, illustrating the challenges and opportunities ahead.

In this work, we have proposed theoretical bounds on the non-local communications needed to map Random Circuits into modular architectures, proving state-of-the-art multi-core mappers to be far from optimal. Moreover, we propose a novel approach, changing the paradigm of multi-core mapping from graph partitioning to two-qubit gate assignment. Throughout rigorous evaluation across different quantum algorithms and scaling approaches, we have shown the potential of HQA, obtaining better results than its analogous mapping algorithms ($1.6\times$ improvement over FGP-rOEE) while decreasing the execution time ($4.9\times$ speedup over FGP-rOEE), showcasing its potential to be scaled up.

Much more needs to be done in mapping for multi-core quantum computers. Our approach focuses on those modular architectures based on the generation and distribution of entangled states. However, other modular approaches will require different mapping algorithms that use other communication primitives than the ones used in this work. Moreover, other EPR-based communication primitives can be added to the mapping algorithm to take advantage of all types of quantum communications. 

\textcolor{black}{The main reason for scalability in quantum systems is to incorporate quantum error correction \cite{nielsen_chuang_2010} (QEC) and fault-tolerant techniques that will allow reliable and accurate computations. Therefore, it is crucial to investigate not only how to integrate QEC in these new modular architectures but also what runtime and compiler support will be required considering the constraints imposed by various correction codes, leveraging them to enhance the mapping optimization process.}

\section*{Acknowledgments}

Authors gratefully acknowledge funding from the European Commission through HORIZON-EIC-2022-PATHFINDEROPEN-01-101099697 (QUADRATURE) and grant HORIZON-ERC-2021-101042080 (WINC). This work has been partially supported by the Spanish Ministerio de Ciencia e Innovaci\'{o}n and European ERDF under grant PID2021-123627OB-C5.

\bibliographystyle{ACM-Reference-Format}
\bibliography{bibliography}

\end{document}